\numberwithin{equation}{section}
\theoremstyle{definition}
\newtheorem{thm}{Theorem}
\newtheorem{prop}[thm]{Proposition}
\newtheorem{defn}[thm]{Definition}
\newtheorem{rem}[thm]{Remark}
\newcommand{\Figref}[1]{Fig.~\ref{#1}}
\newcommand{\norm}[1]{\left\lVert#1\right\rVert}
\newcommand{\Log}[1]{\operatorname{Log} #1 }
\newcommand{\Exp}[1]{\operatorname{Exp} #1 }
\newcommand{\tr}{\operatorname{tr} }
\newcommand{\argmin}{\operatorname{arg}\min }
\newcommand{\hermconj}{^{\mathsf{H}}}
\newcommand{\frechetderi}{  \left.\frac{d}{d\varepsilon}\right|_{\varepsilon = 0}}
\begin{document}

\title{Towards a median signal detector through the total Bregman divergence and its robustness analysis }
\author{Yusuke Ono\thanks{E-mail: yuu555yuu@keio.jp} \ and Linyu Peng\thanks{Corresponding author. E-mail: l.peng@mech.keio.ac.jp} 
\vspace{0.4cm}
\\
{\it Department of Mechanical Engineering, Keio University,}
\\
{\it Yokohama 223-8522, Japan}}

%
\maketitle

\abstract{%
A novel family of geometric signal detectors are proposed through  medians of the total Bregman divergence (TBD), which are shown advantageous over the conventional methods and their mean counterparts.  By interpreting the observation data as Hermitian positive-definite matrices, their mean or median play an essential role in signal detection. As is difficult to be solved analytically, we propose numerical solutions through Riemannian gradient descent algorithms or fixed-point algorithms. 

Beside  detection performance, robustness of a detector to outliers is also of vital importance, which can often be analyzed via the influence functions.  Introducing an orthogonal basis for Hermitian matrices, we are able to compute the corresponding influence functions analytically and exactly by solving a linear system, which is transformed from the governing matrix equation. Numerical simulations show that the TBD medians are more robust than their mean counterparts. 

{\bf  Keywords:} Geometric median, matrix-CFAR, robustness analysis, nonhomogeneous clutter 
}

\section{Introduction} 
In target detection, the constant false alarm rate (CFAR) is a well-known and conventionally used clutter suppression method \cite{RadarSignalProcessing}. A popular CFAR is based on Fourier transform and makes use of the Doppler spectral density to distinguish targets from clutter. However, this method is rather weak for short waveform of dense and nonhomogeneous clutter. 
One of the popular statistical methods for discriminating between the target signal from clutter is to estimate the covariance matrix and test the likelihood ratio. 
There has been a well-known method for estimating the clutter covariance matrix, namely, the sample covariance matrix (SCM) using the maximum likelihood estimation \cite{SCMbyMLE}, but this method performs well only when  enough number of signals is independent and identically distributed \cite{4101326}.
In order to reduce the number of required observation data for achieving efficient detection performance within nonhomogeneous clutter, various knowledge-based methods using {\it a priori} information were proposed \cite{5417154,LI20191,s21072391,6853408,ctx11663242640004034}. 
In \cite{5484507, 4154721}, the Bayesian methods were leveraged to estimate the covariance matrix.
In practice, it is hard to accurately capture the statistical information of clutter in advance which can be range-dependent and non-stationary and the observation data is usually contaminated. 
Insufficient knowledge of clutter environment can lead to the detection performance degradation.

Recently, the matrix-CFAR, which needs no {\it a priori} information but makes use of the covariance matrix of the signals, was proposed and developed \cite{BarInoTool,ETIVC2009}. 
It is based on the fact that the covariance matrix that represents the correlation of complex signals is Hermitian positive-definite (HPD). 
It was applied to target detection \cite{7560351} and drone detection \cite{DroneDetect}. 
Its advantages have been demonstrated in detection of high frequency X-band radar clutter \cite{BarHFandXband}, Burg estimation of radar scatter matrices \cite{7842633}, 
monitoring of wake eddy turbulence \cite{BARBARESCO201054}, etc. 
In recent years, the matrix-CFAR with respect to various divergence functions has been proposed, and its detection performance and robustness to outliers were shown better compared with that of the geodesic distance associated to the affine-invariant Riemannian metric (AIRM) \cite{fixpointHua,MatrixCFARbyKL,CFARJensen,9764734}. 

 In \cite{myarticle}, the total Bregman divergence (TBD) defined in  HPD manifolds was applied to signal detection in nonhomogeneous clutter. The TBD means associated to the total square loss (TSL), the total log-determinant (TLD) divergence and the total von-Neumann (TVN) divergence were investigated; it was shown numerically that the corresponding detectors outperform the AIRM detector and generalized likelihood ratio test (GLRT) detector. 
In the current study, to improve the robustness of the matrix-CFAR, we extend the previous study \cite{myarticle} on TBD means to TBD medians as median detectors are often more robust to outliers than the mean counterparts. In addition, a novel analytical method for robustness analysis is proposed.
Main contributions of the current paper are outlined as follows. 
\begin{itemize}
\item[(1)]  We define the TBD medians associated with the TSL, the TLD divergence and the TVN divergence, and investigate their detection performance compared with the TBD means, the Riemannian distance (RD) mean and median associated to the AIRM,  the GLRT detector as well as the adaptive normalized matched filter (ANMF). Medians are often difficult to be calculated analytically; instead, we introduce fixed-point algorithms or Riemannian gradient descent algorithms to solve them numerically. 
\item[(2)] Influence functions can be used to describe the robustness of a detector about outliers. As noticed in \cite{myarticle}, the matrix equations for influence functions are not commonly seen in matrix theory and to solve them directly can be very challenging;  approximate solutions were obtained and applied therein.  By defining an orthogonal basis for Hermitian matrix spaces with the induced Frobenius metric in this study,  we
transform those matrix equations to linear systems, which can immediately (at least in principle) be  solved exactly.  This methodology allows us to precisely capture the robustness of all means and medians discussed in the current paper. 
\end{itemize}

The structure of this paper is as follows. Theory of the matrix-CFAR is summarized in Section \ref{sec:mat}.
In Section \ref{sec:HPD}, after briefly reviewing the geometry of HPD equipped with the AIRM, we introduce Riemannian gradient descent algorithms for solving the corresponding RD mean and median. Riemannian gradient, also called natural gradient, assigns the steepest direction in Riemannian manifolds. Furthermore, the TBD medians are defined and  solved numerically through the  fixed-point algorithms.
Numerical detection performance is conduced in Section \ref{sec:DP} by comparing the RD mean and median, the TBD means and medians, and the conventional GLRT. 
In Section \ref{sec:RA}, an orthogonal basis for Hermitian matrices is defined and applied to the computation of influence functions, allowing us to accurately illustrate robustness of all means and medians.
We conclude and address  several lines of potential future researches finally in Section \ref{sec:con}.

\section{Formulation of the matrix-CFAR}
\label{sec:mat}
In this section, the method of matrix-CFAR will be recalled. 
Assume that the observation data are collected from $N$ stationary channels and
consider a detection problem concerning a moving target under the nonhomogeneous clutter environment. The problem of detection  will be modeled as a binary hypothesis testing. In this testing, the null hypothesis $H_0$ represents that the observation data $\bm{x}$ is only the clutter $\bm{c}$ and the alternative hypothesis $H_0$ represents that $\bm{x}$ contains both the target signal $\bm{s}$ and the clutter $\bm{c}$, namely
\begin{eqnarray}
    \begin{cases}
        H_0:
        \begin{cases}
			\mathrm{target\ is\ absent,} \\
            \bm{x} = \bm{c}, \\
        \end{cases}  
            \vspace{0.2cm}\\
        H_1:
        \begin{cases}
			\mathrm{target\ is\ present,} \\
            \bm{x} = \xi  \bm{s} + \bm{c}, \\
        \end{cases}  \\
    \end{cases}
\end{eqnarray}
where $\xi$ denotes the unknown amplitude coefficient which is  complex scalar-valued. 
The observation data $\bm{x}$ and the target signal $\bm{s}$ denoting the steering vector are respectively given by
\begin{eqnarray}
        &\bm{x} = \left(x_0, \ldots, x_{N-1}\right)^{\operatorname{T}}\in\mathbb{C}^N, \label{eq: recieve signal}\\
        &\bm{s} = \dfrac{1}{\sqrt{N}}\left(1, \exp(-\operatorname{i}2\pi f_d), \ldots, \exp(-\operatorname{i}2\pi f_d(N-1))\right)^{\operatorname{T}}, \label{eq:target}
\end{eqnarray}
where $\mathbb{C}^N$ represents the $N$ dimensional complex space, $\operatorname{i}$ denotes the imaginary unit, $f_d$ is the signal normalized Doppler frequency, and $(\cdot)^{\operatorname{T}}$ denotes the transpose of matrices or vectors.
Conventional SCM estimator, derived by secondary data set $\left\{ \bm{x}_i \right\}_{i=1}^m$, is given by \cite{SCMbyMLE} 
\begin{eqnarray}
    \label{eq:SCM}
	\bm{R}_{\mathrm{SCM}} = \frac1m \sum_{i=1}^m  \bm{x}_i \bm{x}_i\hermconj,\ \  \bm{x}_i \in \mathbb{C}^N ,
\end{eqnarray}
where $(\cdot)\hermconj$ represents the conjugate transpose of matrices or vectors. This  estimator has been widely utilized;  for instance, the GLRT detector reads \cite{GLRT}
\begin{eqnarray}
    \frac{\left| \bm{x}\hermconj \bm{R}^{-1}_{\mathrm{SCM}} \bm{s} \right|^2}{\bm{s}\hermconj\bm{R}^{-1}_{\mathrm{SCM}} \bm{s} } \overset{H_1}{\underset{H_0}{\gtrless}} \gamma 
\end{eqnarray}
and the ANMF reads \cite{381910}
\begin{eqnarray}
    \frac{\left| \bm{s}\hermconj \bm{R}^{-1}_{\mathrm{SCM}} \bm{x} \right|^2}{\left( \bm{x}\hermconj\bm{R}^{-1}_{\mathrm{SCM}} \bm{x} \right)\left( \bm{s}\hermconj\bm{R}^{-1}_{\mathrm{SCM}} \bm{s} \right)} \overset{H_1}{\underset{H_0}{\gtrless}} \gamma, 
\end{eqnarray}
where $\gamma$ denotes the detection threshold. 

In the matrix-CFAR, to determine whether the received signal $\bm{x}$ includes a target $\bm{s}$ or not, we model the observation data by the covariance matrix as a Toeplitz HPD matrix defined as \cite{6514112,40004034}
\begin{equation}\label{eq:covariancematrix}
	\begin{aligned}
	\hspace{-7mm}\bm{R} = 
	    \begin{pmatrix}
		    r_0 & \cdots & \overline{r}_k & \cdots & \overline{r}_{N-1} \\
		    \vdots & \ddots & \ddots & \ddots & \vdots \\
		    r_k & \ddots & r_0 & \ddots & \overline{r}_k \\
		    \vdots & \ddots & \ddots & \ddots & \vdots \\
		    r_{N-1} & \cdots & r_k & \cdots & r_0 \\
	    \end{pmatrix}
	    \in \mathscr{P}(N,\mathbb{C}) ,\\
	\end{aligned}
\end{equation}
where $ \mathscr{P}(N,\mathbb{C}) $ denotes the space of $N \times N $ HPD matrices. Its component
\begin{equation}
		r_k = \mathbb{E} \left[x_l \overline{x}_{l+k} \right], \ \ 0 \le k \le N-1, \ 0 \le l \le N-l-1 
\end{equation}
is the $k$-th correlation coefficient.  Here, $\overline{r}_{k}$ is the conjugate of $r_{k}$, and $\mathbb{E}[\cdot]$ is the statistical expectation. The Toeplitz covariance structure has been widely studied and applied as it can often improve the estimate of the covariance matrix; see, e.g., \cite{ABY2012,9054969,987665}.


Based on the ergodicity of a stationary process, each component $r_{l}$ can be approximated by the observation data as
\begin{eqnarray}
	\label{eq:autoregressive}
		r_k & = \dfrac{1}{N} \sum\limits_{l = 0} ^{N-1-l} x_l \overline{x}_{l+k} ,\ \ 0 \le k \le N-1. 
\end{eqnarray}
The covariance matrix of clutter is estimated as $\bm{R}_{g}$ by the observations $\{ \bm{R}_i\}_{i=1}^m$. Consequently,  the detection problem can be modeled as 
\begin{eqnarray}
    \begin{cases}
        H_0:\bm{R}_{\mathrm{CUT}} = \bm{R}_{g}, \vspace{0.15cm} \\
        H_1:\bm{R}_{\mathrm{CUT}} \neq \bm{R}_{g}, \\
    \end{cases}
\end{eqnarray}
where the matrix $\bm{R}_{\mathrm{CUT}}$ is the covariance matrix of the observation in the cell under test (CUT). By defining a threshold $\gamma$, the signal detection is modeled as discriminating $\bm{R}_{\mathrm{CUT}}$ from $\bm{R}_{g}$ by
\begin{eqnarray}\label{eq:decison_of_R_g}
    d\left( \bm{R}_g, \bm{R}_{\mathrm{CUT}}\right) \overset{H_1}{\underset{H_0}{\gtrless}} \gamma, 
\end{eqnarray}
where $d\left( \bm{R}_{g}, \bm{R}_{\mathrm{CUT}}\right)$ is the difference between $\bm{R}_{\mathrm{CUT}}$ and $\bm{R}_{g}$, such as the Riemannian distance or TBD divergences that we are going to introduce in the next section. 

The matrix-CFAR illustrated by \Figref{fig:processMCFAR} is summarized up as follows. Estimator of covariance matrices $\{ \bm{R}_i\}_{i=1}^m$ of the observation data $\{ \bm{x}_i\}_{i=1}^m$ is firstly calculated by \eqref{eq:covariancematrix} and \eqref{eq:autoregressive}. 
Mean or median $\bm{R}_g$ of those covariance matrices will be obtained by using either the distance or divergences. The matrix $\bm{R}_g$ can be considered as an estimator of the covariance matrix of the clutter since we consider almost all observation data $\{ \bm{x}_i\}_{i=1}^m$ as the clutter, that is essential in design of the matrix-CFAR. 
The  detection decision is performed by the comparison of the distance or difference of $\bm{R}_{\mathrm{CUT}}$ and $\bm{R}_g$ and the threshold $\gamma$. 
\begin{figure}[htbp]
	\centering
	\includegraphics[width = 0.8\linewidth]{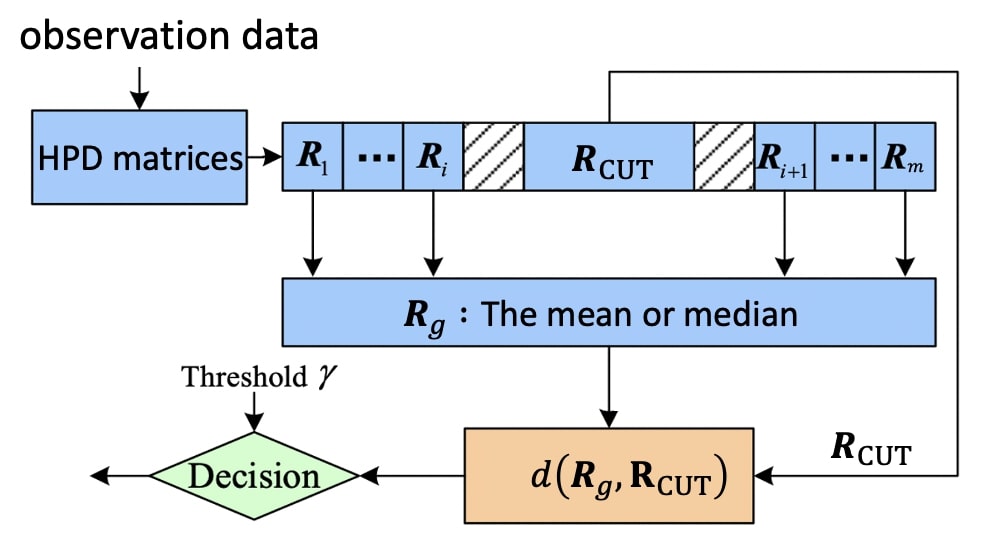}
	\caption{The process of matrix-CFAR.}
	\label{fig:processMCFAR}
\end{figure}

\section{Means and medians of HPD matrices}
\label{sec:HPD}
In this section, we briefly introduce geometric means and medians in HPD spaces equipped with either the AIRM as Riemannian manifolds or  the Frobenius metric as metric spaces. In the latter, divergence functions can be defined, playing the role of measuring the difference of two HPD matrices. 
\subsection{HPD manifolds and the RD mean and median}
We firstly define the AIRM on the HPD matrix manifold $\mathscr{P}(N,\mathbb{C})$, and introduce the corresponding geodesic distance, also called the Riemannian distance. The tangent space at a point  $\bm{P}\in \mathscr{P}(N,\mathbb{C})$ is  $T_{\bm{P}}\mathscr{P}(N,\mathbb{C}) \cong \mathscr{H}(N,\mathbb{C})$, which denotes the space of $N\times N$ Hermitian matrices. The tangent bundle is denoted by $T\mathscr{P}(N,\mathbb{C})=\cup_{\bm{P}}T_{\bm{P}}\mathscr{P}(N,\mathbb{C})$.
The well-known AIRM is defined by
\begin{eqnarray}
\langle \bm{A},\bm{B} \rangle_{\bm{P}} := \tr \left(\bm{P}^{-1}\bm{A}\bm{P}^{-1}\bm{B}\right) 
\end{eqnarray}
for  $ \bm{A},\bm{B}\in T_{\bm{P}}\mathscr{P}(N,\mathbb{C})$, where $\operatorname{tr}$ denotes the trace of a matrix.
The geodesic curve between two points $\bm{P}_0, \bm{P}_1 \in \mathscr{P}(N,\mathbb{C}) $ is given by
\begin{equation}\label{eq:geocurve}
    \bm{P}(t) = \bm{P}_0^{\frac12} \exp \left( t \bm{P}_0^{\frac12} \Log \left( \bm{P}_0^{-\frac12} \bm{P}_1 \bm{P}_0^{-\frac12} \right)\bm{P}_0^{\frac12} \right) \bm{P}_0^{\frac12}, \ t\in[0,1],
\end{equation}
where $\exp$ denotes matrix exponential and $\operatorname{Log}$ denotes the principal logarithm \cite{Hig2008}.
Obviously, $\bm{P}(0)  = \bm{P}_0 ,$  and $\bm{P}(1) = \bm{P}_1 $.
Under the AIRM metric, the geodesic distance between two points $\bm{P}_0, \bm{P}_1 \in \mathscr{P}(N,\mathbb{C}) $ is given by 
\begin{eqnarray}\label{eq:Riedis}
		d_R\left( \bm{P}_0, \bm{P}_1 \right) 
		& = &\norm{ \Log \left( \bm{P}_0^{-\frac12} \bm{P}_1 \bm{P}_0^{-\frac12} \right) } \nonumber \\
		& = &\norm{ \Log \left(  \bm{P}_1^{-1} \bm{P}_0 \right) }, 
\end{eqnarray}
where $\left\| \cdot \right\|$ is the Frobenius norm associated to the Frobenius metric $\left\langle \bm{A}, \bm{B} \right\rangle := \tr{\left( \bm{A}\hermconj \bm{B}\right)}$. For more details, the reader may refer to \cite{RiegeoMoa}.

In the Riemannian manifold  $\mathscr{P}(N,\mathbb{C})$ equipped with the AIRM, the exponential map 
$\Exp: T\mathscr{P}(N,\mathbb{C})  \rightarrow \mathscr{P}(N,\mathbb{C})$
is defined by making use of the geodesics. Locally, assuming $\bm{P}(t)$ is the geodesic  \eqref{eq:geocurve} connecting two arbitrary HPD matrices $\bm{P}_0=\bm{P}(0),\bm{P}_1=\bm{P}(1)$, the exponential map gives endpoint of the geodesic, namely
\begin{equation*}
    \begin{aligned}
        \Exp \left( \bm{P}_0, \dot{\bm{P}}(0) \right) & = \bm{P}_1 \\
			& =  \bm{P}_0^{\frac12} \exp \left(  \bm{P}_0^{\frac12} \Log \left( \bm{P}_0^{-\frac12} \bm{P}_1 \bm{P}_0^{-\frac12} \right)\bm{P}_0^{\frac12} \right) \bm{P}_0^{\frac12} \\
		    & = \bm{P}_0^{\frac12} \exp \left( \bm{P}_0^{-\frac12} \dot{\bm{P}}(0) \bm{P}_0^{-\frac12}\right) \bm{P}_0^{\frac12},\\
    \end{aligned}
\end{equation*}
where $\dot{\bm{P}}(0)$ is the tangent vector at the point $\bm{P}_0$.
Using the Riemannian distance \eqref{eq:Riedis}, we recall the RD mean and median of $m$ HPD matrices $\{ \bm{R}_i\}_{i=1}^m$ as follows
\begin{eqnarray}
		\overline{ \bm{R} }_{\mathrm{RDmean}}&: = \underset{ \bm{R} \in \mathscr{P}(N,\mathbb{C} )} \argmin \dfrac{1}{m}\sum\limits_{i=1}^{m} d_R^2(\bm{R}, \bm{R}_i), \\
		\overline{ \bm{R} }_{\mathrm{RDmedian}}&: = \underset{ \bm{R} \in \mathscr{P}(N,\mathbb{C} )} \argmin \dfrac{1}{m}\sum\limits_{i=1}^{m} d_R(\bm{R}, \bm{R}_i). 
\end{eqnarray}   

In other words, the RD mean and median are, respectively, solutions of the optimization problems with objective functions 
\begin{eqnarray}
        G_{\mathrm{RDmean}}\left( \bm{R} \right) = \frac{1}{m}\sum\limits_{i=1}^{m} d_R^2(\bm{R}, \bm{R}_i),\\ G_{\mathrm{RDmedian}}\left( \bm{R} \right) = \frac{1}{m}\sum\limits_{i=1}^{m} d_R(\bm{R}, \bm{R}_i).
\end{eqnarray}

\begin{prop}\label{prop1}
Since the RD mean and median are related to optimisation problems on HPD spaces as Riemannian manifolds, it is natural to propose the Riemannian gradient descent algorithms, i.e., gradient descent algorithms on Riemannian manifolds \cite{6556702,Robuststatistics}: 
\begin{equation}
    \begin{aligned}
		\bm{R}_{t+1} & = \Exp \left( \bm{R}_{t}, - \eta_t \operatorname{grad} G_{\mathrm{RDmean}} \left( \bm{R}_{t} \right) \right) \\
		& = \bm{R}_{t}^{\frac12} \exp \left( - \eta_t \bm{R}_{t}^{-\frac12} \operatorname{grad} G_{\mathrm{RDmean}} \left( \bm{R}_{t} \right) \bm{R}_{t}^{-\frac12} \right) \bm{R}_{t}^{\frac12} \\
        & = \bm{R}_{t}^{\frac12} \exp \left( - \frac{2 \eta_t}{m} \sum_{i=1}^{m} \Log \left( \bm{R}_{t}^{\frac12} \bm{R}_{i}^{-1} \bm{R}_{t}^{\frac12}\right) \right) \bm{R}_{t}^{\frac12},
    \end{aligned}
\end{equation}
and
\begin{equation}
    \begin{aligned}
		\bm{R}_{t+1} & = \Exp \left( \bm{R}_{t}, - \eta_t \operatorname{grad} G_{\mathrm{RDmedian}} \left( \bm{R}_{t} \right) \right) \\
		& = \bm{R}_{t}^{\frac12} \exp \left( - \eta_t \bm{R}_{t}^{-\frac12} \operatorname{grad} G_{\mathrm{RDmedian}} \left( \bm{R}_{t} \right) \bm{R}_{t}^{-\frac12}\right) \bm{R}_{t}^{\frac12} \\
	    & =\bm{R}_{t}^{\frac12} \exp \left(- \frac{\eta_t}{m} \sum_{i=1}^{m} \frac{ \Log \left( \bm{R}_{t}^{\frac12} \bm{R}_{i}^{-1} \bm{R}_{t}^{\frac12}\right) }{ d_R\left(\bm{R}_t ,\bm{R}_i \right) } \right) \bm{R}_{t}^{\frac12}.
    \end{aligned}
\end{equation}
Here  $\eta_t$ is the step size, and $\operatorname{grad}$ denotes the Riemannian gradient on Riemannian manifolds.  
\end{prop}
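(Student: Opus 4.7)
The plan is to derive the two update recursions by computing the Riemannian gradients of $G_{\mathrm{RDmean}}$ and $G_{\mathrm{RDmedian}}$ with respect to the AIRM and substituting them into the exponential-map formula $\operatorname{Exp}(\bm{R}_t,\bm{V})=\bm{R}_t^{1/2}\exp(\bm{R}_t^{-1/2}\bm{V}\bm{R}_t^{-1/2})\bm{R}_t^{1/2}$ derived earlier in this section, with the choice $\bm{V}=-\eta_t\operatorname{grad} G(\bm{R}_t)$.

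First I would compute the Riemannian gradient of the single-datum squared-distance function $f_i(\bm{R}):=d_R^2(\bm{R},\bm{R}_i)=\left\lVert\operatorname{Log}(\bm{R}^{-1/2}\bm{R}_i\bm{R}^{-1/2})\right\rVert^2$. Taking the Fréchet derivative along a tangent direction $\bm{H}\in T_{\bm{R}}\mathscr{P}(N,\mathbb{C})\cong\mathscr{H}(N,\mathbb{C})$ and matching against the AIRM through $Df_i(\bm{R})[\bm{H}]=\langle\operatorname{grad} f_i(\bm{R}),\bm{H}\rangle_{\bm{R}}=\tr\!\bigl(\bm{R}^{-1}\operatorname{grad} f_i(\bm{R})\,\bm{R}^{-1}\bm{H}\bigr)$ delivers the closed form
\begin{equation*}
\operatorname{grad} f_i(\bm{R})=2\,\bm{R}^{1/2}\operatorname{Log}\!\bigl(\bm{R}^{1/2}\bm{R}_i^{-1}\bm{R}^{1/2}\bigr)\bm{R}^{1/2}.
\end{equation*}
Averaging over $i$ then yields $\operatorname{grad} G_{\mathrm{RDmean}}(\bm{R})$, and the chain rule $\operatorname{grad} d_R(\bm{R},\bm{R}_i)=\operatorname{grad} f_i(\bm{R})/(2\,d_R(\bm{R},\bm{R}_i))$ yields $\operatorname{grad} G_{\mathrm{RDmedian}}(\bm{R})$ at any base point separated from every datum.

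Second, I would substitute each gradient into the exponential map. For the mean case,
\begin{equation*}
\bm{R}_t^{-1/2}\bigl(-\eta_t\operatorname{grad} G_{\mathrm{RDmean}}(\bm{R}_t)\bigr)\bm{R}_t^{-1/2}=-\frac{2\eta_t}{m}\sum_{i=1}^m\operatorname{Log}\!\bigl(\bm{R}_t^{1/2}\bm{R}_i^{-1}\bm{R}_t^{1/2}\bigr),
\end{equation*}
and the exponential-map formula reproduces the stated recursion line by line; the median update is identical except that each summand carries an additional factor of $1/d_R(\bm{R}_t,\bm{R}_i)$, reflecting the single $d_R$ (rather than $d_R^2$) in the objective.

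The main obstacle is rigorously deriving the closed form for $\operatorname{grad} f_i(\bm{R})$, because the matrix logarithm is non-commutative and its Fréchet derivative is not simply $\bm{R}^{-1}\bm{H}$. The cleanest route is to exploit the affine invariance of the AIRM: for any $\bm{M}\in GL(N,\mathbb{C})$ the map $\bm{R}\mapsto\bm{M}\bm{R}\bm{M}\hermconj$ is an isometry, so choosing $\bm{M}=\bm{R}^{-1/2}$ reduces the computation to the base point $\bm{I}$. There $\tilde{\bm{R}}_i:=\bm{R}^{-1/2}\bm{R}_i\bm{R}^{-1/2}$ commutes with $\operatorname{Log}\tilde{\bm{R}}_i$, the AIRM coincides with the Frobenius metric, and cyclicity of trace gives $\operatorname{grad}\tilde{f}_i(\bm{I})=2\operatorname{Log}(\tilde{\bm{R}}_i^{-1})=2\operatorname{Log}(\bm{R}^{1/2}\bm{R}_i^{-1}\bm{R}^{1/2})$ immediately. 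Pushing this gradient forward under the isometry, i.e.\ conjugating by $\bm{R}^{1/2}$ on both sides, recovers the general formula and closes the derivation.
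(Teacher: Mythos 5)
Your derivation is correct, and it supplies precisely what the paper leaves implicit: the proposition is stated without proof, the gradient formulas being treated as standard results with a pointer to the literature on Riemannian gradient descent. Your route --- reducing to the base point $\bm{I}$ via the affine isometry $\bm{X}\mapsto\bm{R}^{-1/2}\bm{X}\bm{R}^{-1/2}$, obtaining $\operatorname{grad}\, d_R^2(\bm{R},\bm{R}_i)=2\,\bm{R}^{1/2}\operatorname{Log}\bigl(\bm{R}^{1/2}\bm{R}_i^{-1}\bm{R}^{1/2}\bigr)\bm{R}^{1/2}$, dividing by $2\,d_R(\bm{R},\bm{R}_i)$ for the median via the chain rule, and inserting into the exponential map --- is the standard argument and reproduces both recursions exactly. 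Two minor points to tighten if you write this out in full: the step $\frechetderi \tr \operatorname{Log}^2(\bm{A}+\varepsilon\bm{B})=2\tr\bigl(\operatorname{Log}(\bm{A})\,\bm{A}^{-1}\bm{B}\bigr)$ deserves a line of justification (e.g.\ via the integral representation of the Fr\'echet derivative of $\operatorname{Log}$ and cyclicity of the trace, since $\operatorname{Log}$ does not commute with arbitrary perturbations), and the median objective is non-differentiable when $\bm{R}_t$ coincides with some $\bm{R}_i$, a caveat you already flag and which the paper's iteration tacitly assumes away.
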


For more details on Riemannian gradient descent algorithms, the reader may refer to \cite{Smith1993,Udr1994}.  The RD mean and median can also be calculated numerically through the fixed-point algorithms \cite{fixpointHua}. It was noted that the detection performance of the RD mean in the matrix-CFAR is better than that of the RD median  \cite{fixpointHua}.

\subsection{The total Bregman divergence, TBD means and medians}
The Bregman divergence for matrices was introduced \cite{dhillon2008matrix} and this idea has been extended to total Bregman divergences \cite{myarticle}.
The total Bregman divergence of two matrices $\bm{A},\bm{B} \in \operatorname{GL}\left( N, \mathbb{C} \right)$, the general linear group of $N \times N$ invertible complex-valued matrices, is defined by 
\begin{equation}
	\hspace{0mm}\delta_F\left(\bm{A}, \bm{B} \right) =  \frac{F(\bm{A}) - F(\bm{B}) - \left\langle \nabla F(\bm{B}), \bm{A} - \bm{B} \right\rangle} {\sqrt{1 + \left\| \nabla F(\bm{B}) \right\|^2}}, 
\end{equation}
where $F \left(\bm{A} \right)$ is a differentiable and strictly convex function in $\operatorname{GL}\left( N, \mathbb{C} \right)$ and $\nabla$ denotes the (Euclidean) gradient with respect to the Frobenius metric, defined by 
\begin{equation}\label{eq:gra}
    \langle \nabla F(\bm{A}), \bm{B}\rangle = \frechetderi F(\bm{A}+\varepsilon \bm{B}), \ \ \bm{A},\bm{B}\in \operatorname{GL}(N,\mathbb{C}).
\end{equation}
The well-known  total square loss (TSL), and the total log-determinant  (TLD) and total von-Neumann  (TVN) divergences defined in the HPD manifold $ \mathscr{P}(N,\mathbb{C})$ are shown in Table \ref{table:TBD}.

\begin{table}[htb]
\caption{The TBDs} \label{table:TBD}
\centering
\begin{tabular}{c|c|c}
\hline
    & The function $F(\bm{Y})$  &  TBDs in $\mathscr{P}(N,\mathbb{C})$ \\ 
\hline 
TSL &  $\frac{1}{2} \left\| \bm{Y} \right\|^2$
    & $\delta_{\mathrm{TSL}}\left(\bm{Y}, \bm{Z} \right) = \cfrac{\left\| \bm{Y} - \bm{Z} \right\|^2}{2\sqrt{1 + \left\| \bm{Z} \right\|^2 }}$ \rule[0pt]{0pt}{20pt} \\
TLD divergence  &  $ - \ln \det \bm{Y} $
    & $\delta_{\mathrm{TLD}}\left(\bm{Y}, \bm{Z} \right) = \cfrac{\ln \det \left(\bm{Z} \bm{Y}^{-1}\right) + \tr \left( \bm{Z}^{-1} \bm{Y} \right) - N}{2\sqrt{1 + \left\| \bm{Z}\hermconj \right\|^2 }}$ \rule[0pt]{0pt}{20pt} \\
TVN divergence  &  $\tr \left(  \bm{Y} \Log \bm{Y} - \bm{Y}\right)$
    &  $\delta_{\mathrm{TVN}}\left(\bm{Y}, \bm{Z} \right) = \cfrac{\tr \left(  \bm{Y} \left( \Log \bm{Y} - \Log \bm{Z}\right) - \bm{Y} + \bm{Z} \right)}{2\sqrt{1 + \left\| \left( \Log \bm{Z} \right)\hermconj \right\|^2 }}$ \rule[0pt]{0pt}{20pt} \\
\hline
\end{tabular}
\end{table}

%
	
Before introducing the TBD medians, we briefly recall their mean counterparts. Details can be found in \cite{myarticle}. 
The TBD mean of $m$ HPD matrices $\{ \bm{R}_i\}_{i=1}^m$ is defined by 
\begin{eqnarray}\label{eq:TBDMeanDef}
	\underset{ \bm{R} \in \mathscr{P}(N,\mathbb{C} )} \argmin \dfrac{1}{m}\sum_{i=1}^{m} \delta_F\left(\bm{R},\bm{R}_i \right). 
\end{eqnarray}
As the TBD $\delta_F(\bm{R},\bm{R}_i)$ is defined in the convex HPD space and is strictly convex with respect to $\bm{R}$, the TBD mean exists uniquely. 
It solves the next matrix equation,
\begin{eqnarray}\label{eq:TBDMeanSolution}
	\nabla F(\bm{R}) = \left({\sum\limits_{j=1}^{m} \frac{1}{\sqrt{1+\norm{\nabla F(\bm{R}_j)}^2}}}\right)^{-1}\left({\sum_{i=1}^{m} \frac{\nabla F(\bm{R}_i)}{\sqrt{1+\norm{\nabla F(\bm{R}_i)}^2}}}\right).
\end{eqnarray}
    
In particular, the TBD means corresponding to the convex functions $F$ given by Table \ref{table:TBD}  can be calculated analytically \cite{myarticle}. 
We summarize them up as follows. 
The TSL mean, TLD mean, and TVN mean of  $m$ HPD matrices $\{ \bm{R}_i \}_{i=1}^m$ are respectively given by
\begin{eqnarray}
    \overline{\bm{R}}_{\mathrm{TSL}} &=& \left( \sum\limits_{j=1}^m \frac{1}{\sqrt{1+\norm{\bm{R}_j}^2}} \right)^{-1} \left(\sum_{i=1}^{m} \frac{\bm{R}_i}{\sqrt{1+\norm{ \bm{R}_i }^2}}\right), \label{eq:TSLmeanfixpoint} \\
    \overline{\bm{R}}_{\mathrm{TLD}} &=& \left( \sum_{j=1}^m \frac{1}{\sqrt{1+\norm{\bm{R}^{-1}_j}^2}} \right) \left(\sum_{i=1}^{m}\frac{\bm{R}_i^{-1}}{\sqrt{1+\norm{\bm{R}_i^{-1}}^2} }\right)^{-1} ,\label{eq:TLDmeanfixpoint} \\
    \overline{\bm{R}}_{\mathrm{TVN}} &=& \exp\left(  \left( \sum_{j=1}^m \frac{1}{\sqrt{1+\norm{\Log \bm{R}_j}^2}} \right)^{-1}  \sum_{i=1}^{m} \frac{ \Log \bm{R}_i}{\sqrt{1+\norm{\Log \bm{R}_j}^2}}\right). \label{eq:TVNmeanfixpoint} 
\end{eqnarray}

\begin{defn}
Assume $F$ is a differentiable and strictly convex function. Let   $\delta_F$ be the associated TBD. The TBD median of $m$ HPD matrices $\{ \bm{R}_i\}_{i=1}^m$ is defined by
	\begin{equation}\label{eq:TBDMedianDef}
		\underset{ \bm{R} \in \mathscr{P}(N,\mathbb{C} )} \argmin \dfrac{1}{m}\sum_{i=1}^{m} \left\{\delta_F\left(\bm{R},\bm{R}_i \right) \right\}^{\frac{1}{2}}.
	\end{equation}
\end{defn}

\begin{thm}
    \label{prop:TBDmedian}
	If the TBD median \eqref{eq:TBDMedianDef} exists, then it satisfies  the matrix equation,
	\begin{equation}\label{eq:TBDMedianSolution}
	    \nabla F\left( \bm{R}\right)=\left( \sum_{j=1}^m \frac{1}{\left\{\delta_F\left(\bm{R},\bm{R}_j \right) \right\}^{\frac{1}{2}} \sqrt{ 1 + \left\| \nabla F(\bm{R}_j) \right\|^2} } \right)^{-1} \left(\sum_{i=1}^m \frac{\nabla F\left( \bm{R}_i\right)}{\left\{\delta_F\left(\bm{R},\bm{R}_i \right) \right\}^{\frac{1}{2}} \sqrt{ 1 + \left\| \nabla F(\bm{R}_i) \right\|^2} }\right).  
	\end{equation}
\end{thm}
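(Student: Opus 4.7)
The plan is to derive the stated matrix equation as the first-order optimality condition for the variational problem \eqref{eq:TBDMedianDef}. Since the TBD median $\bm{R}$ is assumed to exist in the interior of $\mathscr{P}(N,\mathbb{C})$, it must be a critical point of the objective
$$\Phi(\bm{R}) := \frac{1}{m}\sum_{i=1}^m \left\{\delta_F(\bm{R},\bm{R}_i)\right\}^{1/2},$$
so $\nabla \Phi(\bm{R})=\bm{0}$, where $\nabla$ denotes the Frobenius gradient already introduced in \eqref{eq:gra} and used to define the TBD itself. I would work throughout with this gradient to stay compatible with the definition of $\delta_F$.

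First I would differentiate a single term. The chain rule gives
$$\nabla_{\bm R}\left\{\delta_F(\bm{R},\bm{R}_i)\right\}^{1/2} = \frac{1}{2\left\{\delta_F(\bm{R},\bm{R}_i)\right\}^{1/2}}\,\nabla_{\bm R}\,\delta_F(\bm{R},\bm{R}_i).$$
Next, inspecting the definition of $\delta_F(\bm{R},\bm{R}_i)$, only the terms $F(\bm{R})$ and $-\langle \nabla F(\bm{R}_i),\bm{R}\rangle$ depend on $\bm{R}$, while the denominator $\sqrt{1+\|\nabla F(\bm{R}_i)\|^2}$ is constant in $\bm{R}$. Applying the definition \eqref{eq:gra} of $\nabla F$ to the numerator then yields
$$\nabla_{\bm R}\,\delta_F(\bm{R},\bm{R}_i) = \frac{\nabla F(\bm{R}) - \nabla F(\bm{R}_i)}{\sqrt{1+\|\nabla F(\bm{R}_i)\|^2}}.$$

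Combining these two identities and summing over $i$, the critical point condition $\nabla\Phi(\bm{R})=\bm{0}$ becomes
$$\sum_{i=1}^m \frac{\nabla F(\bm{R}) - \nabla F(\bm{R}_i)}{\left\{\delta_F(\bm{R},\bm{R}_i)\right\}^{1/2}\sqrt{1+\|\nabla F(\bm{R}_i)\|^2}} = \bm{0}.$$
Since $\nabla F(\bm{R})$ does not depend on $i$, I would split the sum, factor $\nabla F(\bm{R})$ out of the first piece, and solve, which reproduces \eqref{eq:TBDMedianSolution} line for line.

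The main subtlety, rather than a genuine obstacle, is that the square root in $\{\delta_F(\bm{R},\bm{R}_i)\}^{1/2}$ is non-differentiable whenever $\bm{R}=\bm{R}_i$, exactly as for the scalar geometric median. The derivation therefore implicitly assumes that the median does not coincide with any of the sample points $\bm{R}_i$; otherwise one would have to replace the gradient condition by a subgradient inclusion and allow the corresponding term to be omitted. Under this mild assumption every step is just the chain rule together with linearity of the Frobenius inner product, so I do not anticipate any hard computation beyond bookkeeping.
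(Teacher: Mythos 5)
Your proposal is correct and follows essentially the same route as the paper: define the objective $G(\bm{R})=\frac{1}{m}\sum_i\{\delta_F(\bm{R},\bm{R}_i)\}^{1/2}$, compute its (Frobenius) gradient via the chain rule to get $\frac{1}{2m}\sum_i\frac{\nabla F(\bm{R})-\nabla F(\bm{R}_i)}{\{\delta_F(\bm{R},\bm{R}_i)\}^{1/2}\sqrt{1+\|\nabla F(\bm{R}_i)\|^2}}$, and set it to zero to solve for $\nabla F(\bm{R})$. Your added remark about non-differentiability when $\bm{R}$ coincides with some $\bm{R}_i$ is a reasonable caveat that the paper's terse proof does not address, but it does not change the argument.
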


\begin{proof}
Let  $G(\bm{R})$ be the function to be optimized, namely 
    \begin{equation}
        G(\bm{R}) := \frac{1}{m} \sum\limits_{i = 1}^{m} \left\{\delta_F\left( \bm{R},\bm{R}_i \right) \right\}^\frac{1}{2}.
    \end{equation}
The proof is completed by setting $\nabla G(\bm{R}) = 0$ where the gradient of $G(\bm{R})$ is
    \begin{equation}
            \nabla G(\bm{R}) = \frac{1}{2m} \sum_{i=1}^{m} \frac{\nabla F(\bm{R}) - \nabla F(\bm{R}_i)}{ \left\{\delta_F\left(\bm{R},\bm{R}_i \right) \right\}^\frac{1}{2}  \sqrt{1+\norm{\nabla F(\bm{R}_i)}^2}}. 
    \end{equation}
\end{proof} 
Unfortunately, it is difficult to solve the matrix equation \eqref{eq:TBDMedianSolution} analytically, and alternatively we seek for its numerical solutions using the fixed-point algorithm \cite{MoaAvg}.

\begin{prop}
The TSL median, the TLD median, and the TVN median of $m$ HPD matrices $\{ \bm{R}_i\}_{i=1}^m$ can respectively be calculated by the following fixed-point algorithms in Table \ref{table:fixed-point algorithms}.
\end{prop}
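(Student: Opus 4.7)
The plan is to specialize the general optimality condition \eqref{eq:TBDMedianSolution} established in Theorem~\ref{prop:TBDmedian} to each of the three convex functions $F$ listed in Table~\ref{table:TBD}, then isolate $\bm{R}$ on the left-hand side by inverting the map $\nabla F$, and finally promote the resulting identity into a fixed-point iteration by freezing the $\bm{R}$-dependent weights at the previous iterate $\bm{R}_t$.

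First, I would compute the Euclidean gradients with respect to the Frobenius metric for the three choices of $F$ via the defining relation \eqref{eq:gra}. Direct computation yields $\nabla F(\bm{Y}) = \bm{Y}$ when $F(\bm{Y}) = \tfrac12\|\bm{Y}\|^2$, $\nabla F(\bm{Y}) = -\bm{Y}^{-1}$ when $F(\bm{Y}) = -\ln\det\bm{Y}$, and $\nabla F(\bm{Y}) = \operatorname{Log}\bm{Y}$ when $F(\bm{Y}) = \tr(\bm{Y}\operatorname{Log}\bm{Y} - \bm{Y})$. These are exactly the gradients already used in \cite{myarticle} to produce the closed-form TBD means \eqref{eq:TSLmeanfixpoint}--\eqref{eq:TVNmeanfixpoint}, so no new calculation is required.

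Next, I would substitute each gradient into \eqref{eq:TBDMedianSolution}. For TSL the equation is affine in $\bm{R}$ and collapses directly into the stated update once the weights $\{\delta_{\mathrm{TSL}}(\bm{R},\bm{R}_i)\}^{-1/2}$ are evaluated at $\bm{R}_t$. For TLD, the minus signs on both sides cancel and inversion of $\bm{R}^{-1}$ exchanges the roles of the outer scalar prefactor and the inverse of the inner sum relative to the TSL formula. For TVN, since $\operatorname{Log}$ is a bijection $\mathscr{P}(N,\mathbb{C}) \to \mathscr{H}(N,\mathbb{C})$ with inverse $\exp$, applying the matrix exponential to both sides yields the stated iteration with $\exp$ wrapping the right-hand side. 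In all three cases the fixed-point update is then obtained by evaluating the $\bm{R}$-dependent weights $\{\delta_F(\bm{R},\bm{R}_i)\}^{-1/2}$ at the current iterate $\bm{R}_t$, following the lazy-update strategy of \cite{MoaAvg}.

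The main obstacle is not any single algebraic manipulation, each of which is routine given Theorem~\ref{prop:TBDmedian}, but rather the conceptual point that \eqref{eq:TBDMedianSolution} depends nontrivially on the unknown $\bm{R}$ through the weights $\{\delta_F(\bm{R},\bm{R}_i)\}^{-1/2}$ and therefore, in contrast to the mean equation \eqref{eq:TBDMeanSolution}, cannot be solved in closed form. This is precisely what forces the solution to be cast as a fixed-point iteration. A full convergence analysis—establishing, for example, that the update map is a contraction on $\mathscr{P}(N,\mathbb{C})$ in some suitable metric, or that the objective $G$ of Theorem~\ref{prop:TBDmedian} decreases monotonically along the iterates—would further justify the algorithm, but is beyond the scope of the present proposition, which only asserts the form of the updates.
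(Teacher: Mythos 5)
Your proposal is correct and follows essentially the same route the paper implies: specialize the optimality condition \eqref{eq:TBDMedianSolution} to the three gradients $\nabla F(\bm{Y})=\bm{Y}$, $-\bm{Y}^{-1}$, $\operatorname{Log}\bm{Y}$, invert $\nabla F$ (noting the scalar prefactor commutes in the TLD case), and freeze the $\delta_F$-weights at $\bm{R}_t$ to obtain the iterations of Table~\ref{table:fixed-point algorithms}. Your closing caveat that convergence is not proved matches the paper, which likewise only reports convergence numerically.
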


\begin{table}[htb]
\caption{The fixed-point algorithms for computing TBD medians} \label{table:fixed-point algorithms}
\centering
\small 
\begin{tabular}{c|c}
\hline 
TBD & The fixed-point algorithm\\ 
\hline 
TSL  & 
    $ \overline{\bm{R}}_{t+1} =  \left( \sum\limits_{j=1}^{m} \cfrac{1}{\left\{\delta_{\mathrm{TSL}}\left(\overline{\bm{R}}_{t},\bm{R}_j \right) \right\}^\frac{1}{2} \sqrt{1+\norm{\bm{R}_j}^2}}\right)^{-1}\left(\sum\limits_{i=1}^{m} \cfrac{ \bm{R}_i }{ \left\{\delta_{\mathrm{TSL}}\left(\overline{\bm{R}}_{t},\bm{R}_i \right) \right\}^\frac{1}{2} \sqrt{1+\norm{\bm{R}_i}^2}}\right)  $\\
TLD  & 
    $  \overline{\bm{R}}_{t+1} = \left(\sum\limits_{j=1}^{m} \cfrac{1}{\left\{\delta_{\mathrm{TLD}}\left(\overline{\bm{R}}_{t},\bm{R}_j \right) \right\}^\frac{1}{2} \sqrt{1+\norm{\bm{R}_j^{-1}}^2}}\right)  \left( {\sum\limits_{i=1}^{m} \cfrac{ \bm{R}_i^{-1} }{ \left\{\delta_{\mathrm{TLD}}\left(\overline{\bm{R}}_{t},\bm{R}_i \right) \right\}^\frac{1}{2} \sqrt{1+\norm{\bm{R}_i^{-1}}^2} }} \right)^{-1} $\\
TVN  & 
    $ \overline{\bm{R}}_{t+1} = \mathrm{exp} \left( \left( \sum\limits_{j=1}^{m} \cfrac{1}{\left\{\delta_{\mathrm{TVN}}\left(\overline{\bm{R}}_{t},\bm{R}_j \right) \right\}^\frac{1}{2} \sqrt{1+\norm{\left( \Log \bm{R}_j \right)\hermconj }^2}}\right)^{-1}    
    \sum\limits_{i=1}^{m} \cfrac{ \Log \bm{R}_i }{ \left\{\delta_{\mathrm{TVN}}\left(\overline{\bm{R}}_{t},\bm{R}_i \right) \right\}^\frac{1}{2} \sqrt{1+\norm{ \left( \Log \bm{R}_i \right)\hermconj}^2} } \right)$\\
\hline
\end{tabular}
\end{table}


In the matrix-CFAR, the matrix $\bm{R}_g$ in \eqref{eq:decison_of_R_g} will be replaced by the TBD means, the TBD medians, the RD mean, and so forth; see also \Figref{fig:processMCFAR}.

Numerical simulations in the next sections show that the Riemannian gradient descent algorithms introduced in Proposition \ref{prop1} and the fixed-point algorithms in Table \ref{table:fixed-point algorithms} are all convergent with the initial value chosen as the arithmetic mean of a set of HPD matrices; the tolerance of each iteration is $10^{-3}$.

\section{Numerical performance analysis}\label{sec:DP}
To confirm and compare the performance of the matrix-CFAR via TBD means and medians, the following numerical simulations are conducted. 
For the comparison, we also show results of the RD mean induced from the AIRM, the GLRT, and the ANMF.
Two types of clutter are considered. 
One is the Gaussian clutter modeled by an $N$ dimensional random complex vector possessing the complex circular Gaussian distribution $ CN \left( 0, \bm{\Sigma} \right)$ of zero-mean and a covariance matrix $\bm{\Sigma}$. Its probability density function is given by
\begin{equation}
	p\left(\bm{z}\ |\ 0, \bm{\Sigma}\right) = \frac{1}{\pi^N \det\left(\bm{\Sigma}\right)} \exp \left(- \bm{z}\hermconj \bm{\Sigma}^{-1} \bm{z} \right).
\end{equation}
The other is the compound-Gaussian clutter which is a model commonly used for characterising heavy-tailed clutter distributions, for instance sea clutter \cite{seaclutter}, defined by 
\begin{equation}
	\bm{c} = \sqrt{\tau} \bm{z}. 
\end{equation}
Here $\bm{z}$ is called the speckle component which is fast fluctuating and modeled by a complex Gaussian distribution.
The positive real random variable $\tau$ is called the texture component which is comparatively slowly varying. 
In the simulation, the texture component $\tau$ is assumed to subject to a gamma distribution with a shape parameter $ \alpha$ and a scale parameter $\beta$ whose probability density function is 
\begin{equation}
	q\left(\tau\ |\ \alpha, \beta\right) = \frac{1}{\beta^\alpha \Gamma\left(\alpha\right)} \tau^{\alpha -1} \exp \left(\frac{- \tau}{\beta} \right),
\end{equation}
where $\Gamma (\cdot)$ denotes the gamma function. Consequently, the distribution of clutter is defined by the complex K-distribution which is a special compound-Gaussian model \cite{seaclutter}. 
In the simulations, we generate the Gaussian clutter, as well as the speckle component in the compound-Gaussian clutter, from $CN\left(0,\bm{\Sigma} \right)$ of which the known covariance matrix $\bm{\Sigma}$ is given by 
\begin{equation}\label{eq:sig}
	\bm{\Sigma} = \bm{\Sigma}_0 + \bm{I}_N,
\end{equation}
where 
\begin{equation}
	\bm{\Sigma}_0 \left(i,j\right) = \sigma_c^2 \rho^{\left| i-j \right|} \exp \left( \mathrm{i} 2\pi f_c (i-j) \right), \nonumber \ \ i,j = 1,2, \ldots , N.
\end{equation}
Here, $\bm{I}_N$ is the $N$ dimensional identity matrix. In the simulation,  the clutter-to-noise ratio $\sigma_c$ is set to $\sigma_c^2=20 \mathrm{dB}$, the one-lag coefficient is $\rho =0.9$, and  the normalized Doppler frequency of  clutter is $f_c = 0.2$. Furthermore, the shape parameter is set to $\alpha = 4 $ and the scale parameter is set to $\beta = 3$. The signal to clutter ratio (SCR) is defined by 
\begin{equation}\label{eq:SCR}
    \mathrm{SCR} = |\xi|^2 \bm{s}\hermconj \bm{R}^{-1} \bm{s},
\end{equation}
where $\xi$ is the amplitude coefficient, $\bm{s}$ is the target vector defined as  \eqref{eq:target} and $\bm{R}$ is the clutter covariance matrix.
Furthermore, the false alarm rate $P_{fa}$ is $10^{-3}$, and the signal normalized Doppler frequency $f_d$ is $0.2$.
Two interferences are inserted into clutter data and its normalized Doppler frequency $f_i$ is also $0.2$.
Dimension of the signal is set to $N=8$, and we study the cases when the numbers of observations are $m=8,12,16$, respectively. 

The threshold $\gamma$ is derived by $100/P_{fa}$, and the probability of detection $P_d$ is derived by 2000 independent trails. 
\Figref{fig:detectionperformance} shows that in both types of clutter, the detection performance improves as the number of observations increases. The GLRT and the ANMF are not valid when the number of observations is small but the matrix-CFAR detectors work well.
In the case of Gaussian clutter (see Figs. \ref{fig:m=8gaussian}, \ref{fig:m=12gaussian}, \ref{fig:m=16gaussian}), the TBD means and medians are advantageous over the RD mean and median, the GLRT and the ANMF and in particular, the TLD median has the best performance. In the compound-Gaussian clutter (see Figs. \ref{fig:m=8gamma}, \ref{fig:m=12gamma}, \ref{fig:m=16gamma}), the TVN median has the best performance when the number of observations is not large enough. In the case of $m=16$, the ANMF is the best among all detectors.
Among the TBD means and medians, the TLD and TVN medians behave better than the corresponding  means, but surprisingly the TSL median is worse than the TSL mean.

\begin{rem}
Generally speaking, the simulations show that the matrix-CFAR  performs  better than the GLRT and ANMF particularly when limited data is available.  This is probably due to that the effect of clutter and noise in the matrix-CFAR is less than that in the GLRT and ANMF, since
the threshold of the GLRT and ANMF is derived by using contaminated sample data directly, while the matrix-CFAR leverages the covariance matrix of sample data. 
\end{rem}

\begin{figure}[htbp]
    \begin{minipage}{0.48\hsize}
        \centering
        \includegraphics[width=\hsize]{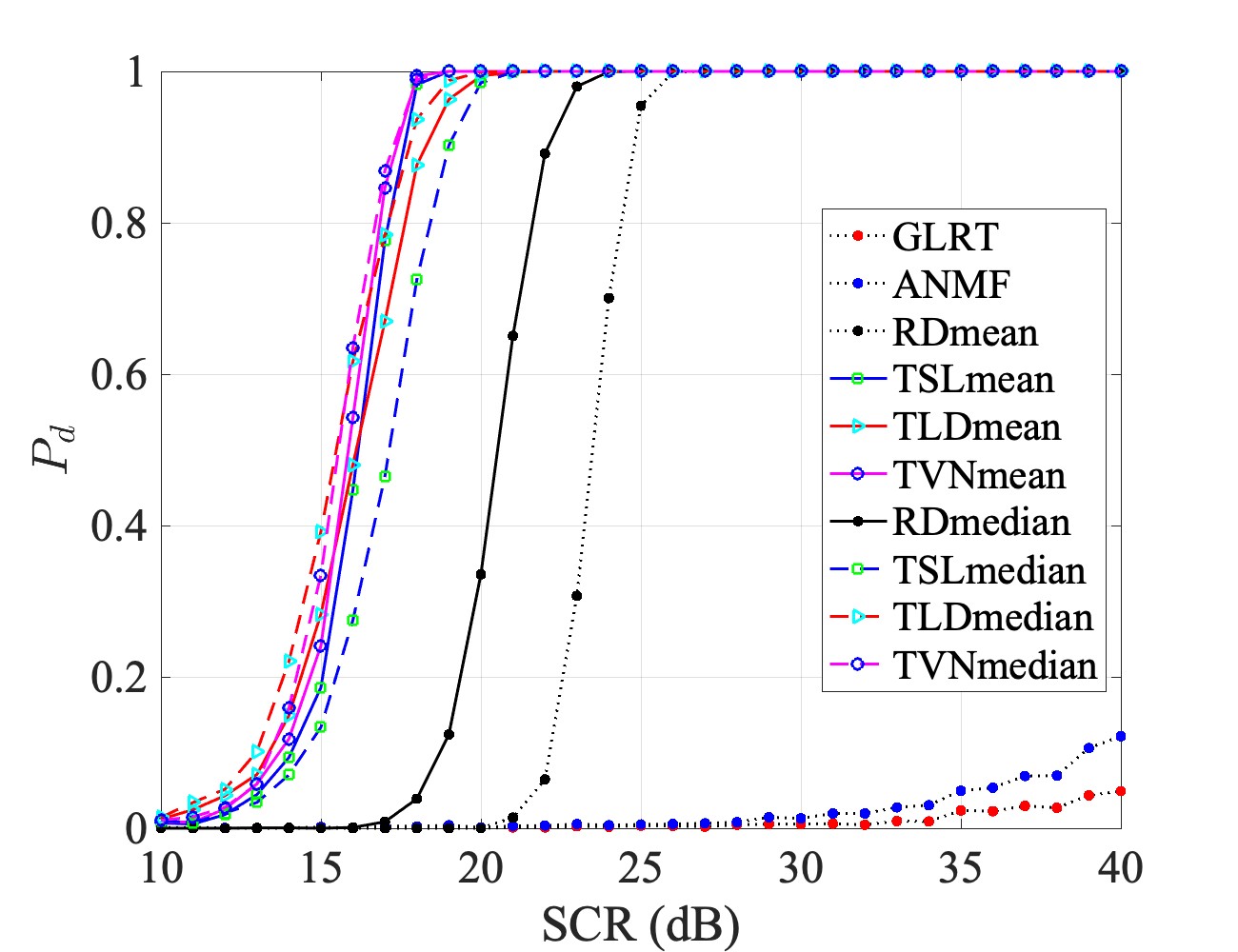}
        \subcaption{Gaussian clutter, $m=8$}\label{fig:m=8gaussian}
    \end{minipage}
    \begin{minipage}{0.48\hsize}
        \centering
        \includegraphics[width=\hsize]{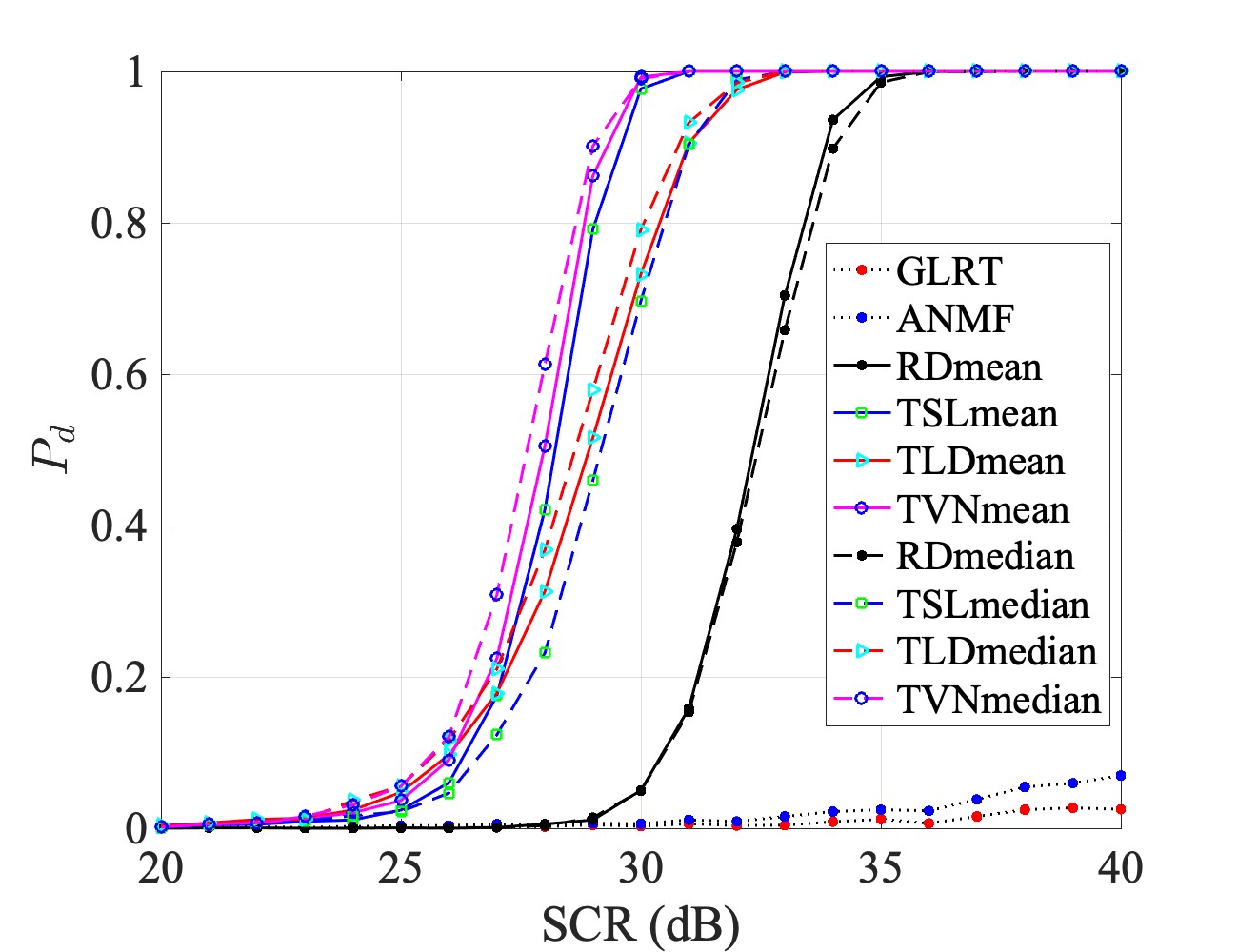}
        \subcaption{Compound-Gaussian clutter, $m=8$}\label{fig:m=8gamma}
    \end{minipage}
    \begin{minipage}{0.48\hsize}
        \centering
        \includegraphics[width=\hsize]{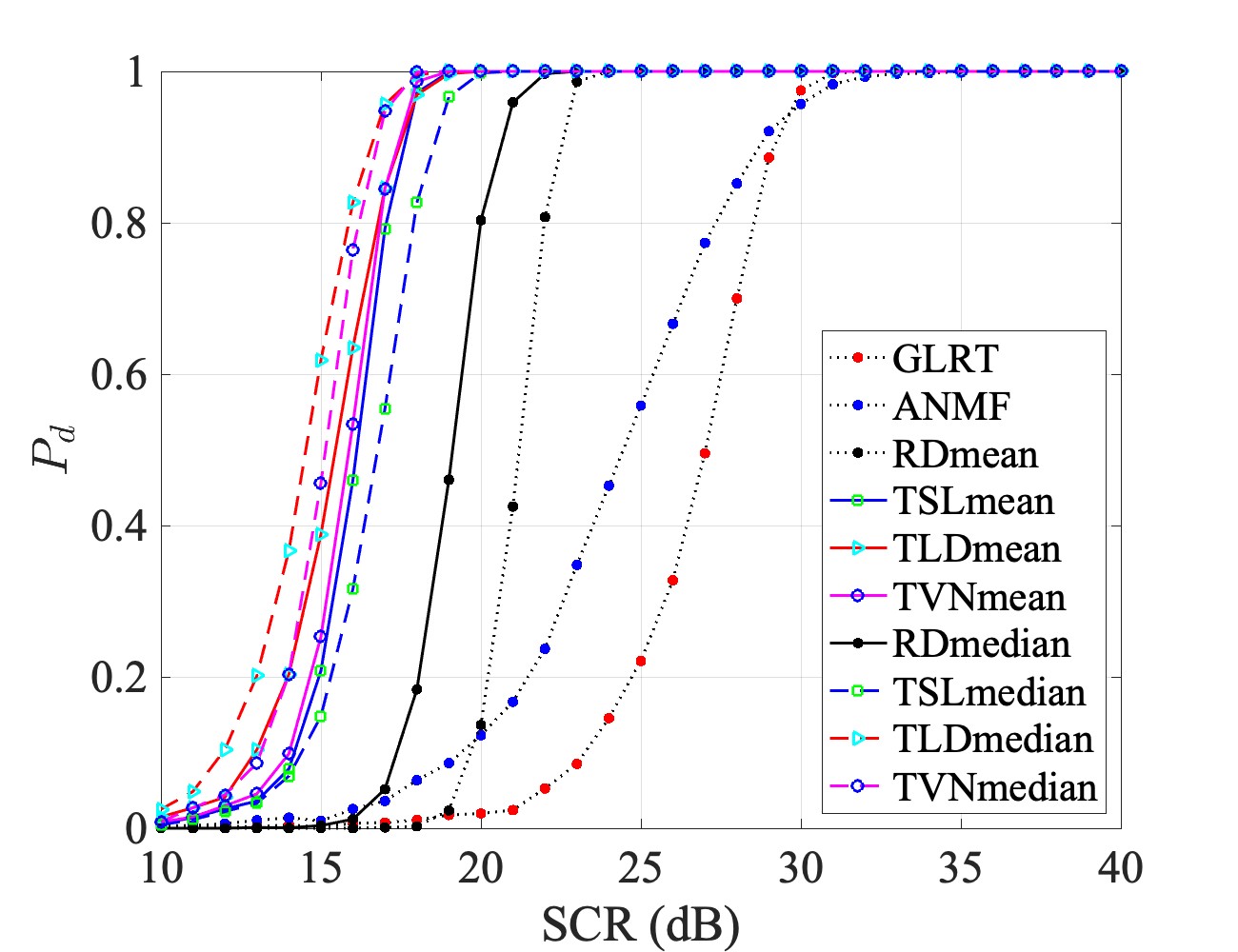}
        \subcaption{Gaussian clutter, $m=12$}\label{fig:m=12gaussian}
    \end{minipage}
    \begin{minipage}{0.48\hsize}
        \centering
        \includegraphics[width=\hsize]{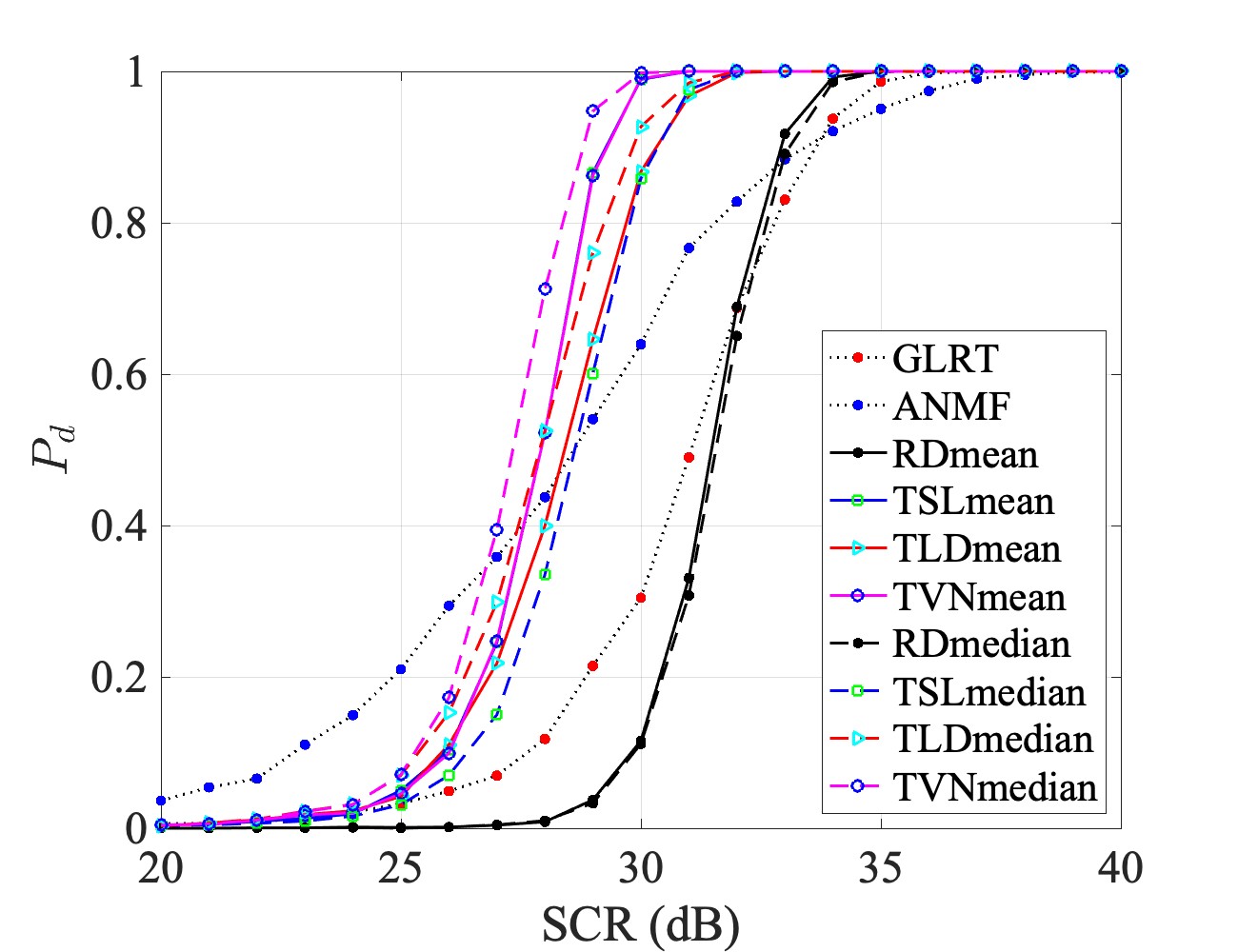}
        \subcaption{Compound-Gaussian clutter, $m=12$}\label{fig:m=12gamma}
    \end{minipage}
    \begin{minipage}{0.48\hsize}
        \centering
        \includegraphics[width=\hsize]{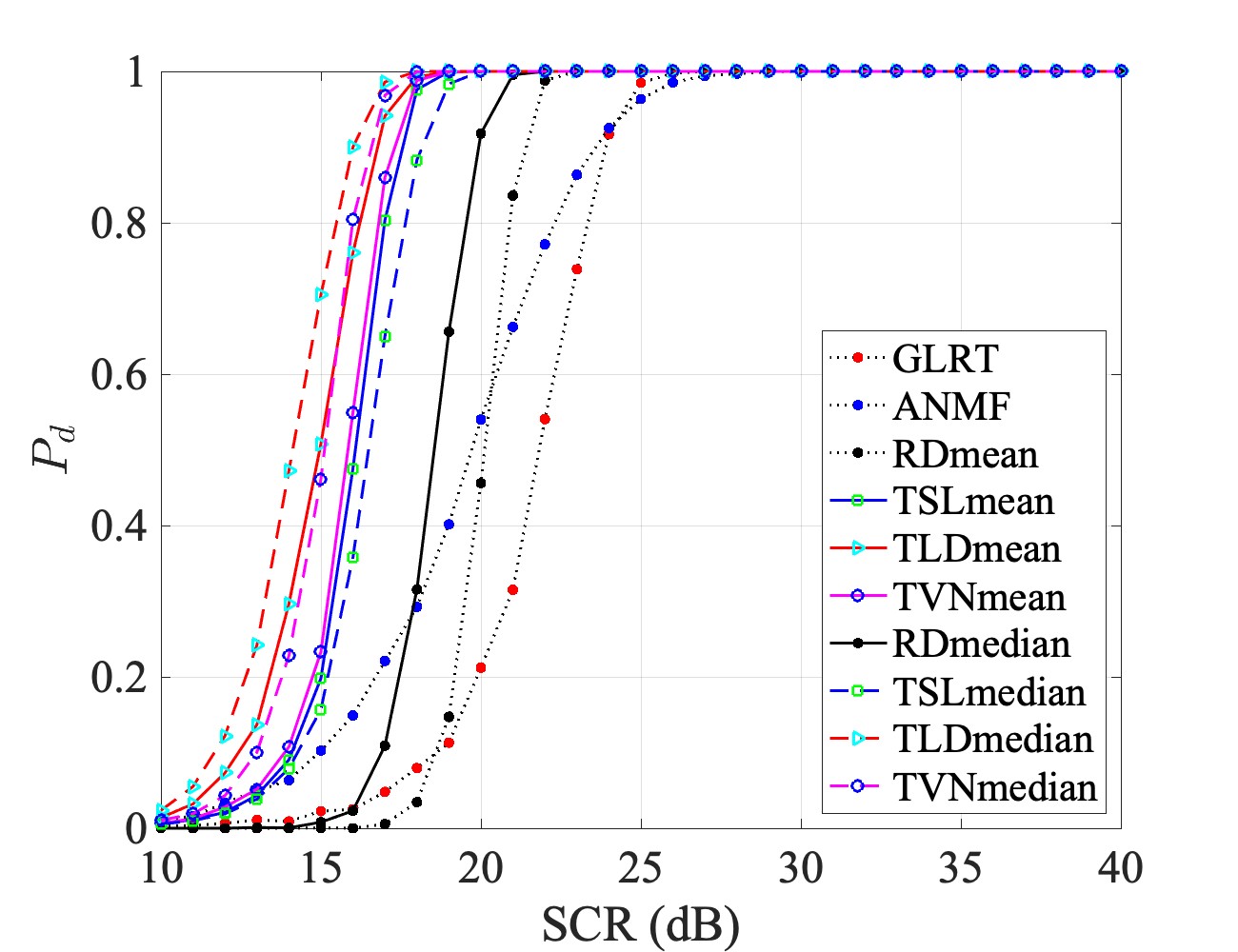}
        \subcaption{Gaussian clutter, $m=16$}\label{fig:m=16gaussian}
    \end{minipage}
    \begin{minipage}{0.48\hsize}
        \centering
        \includegraphics[width=\hsize]{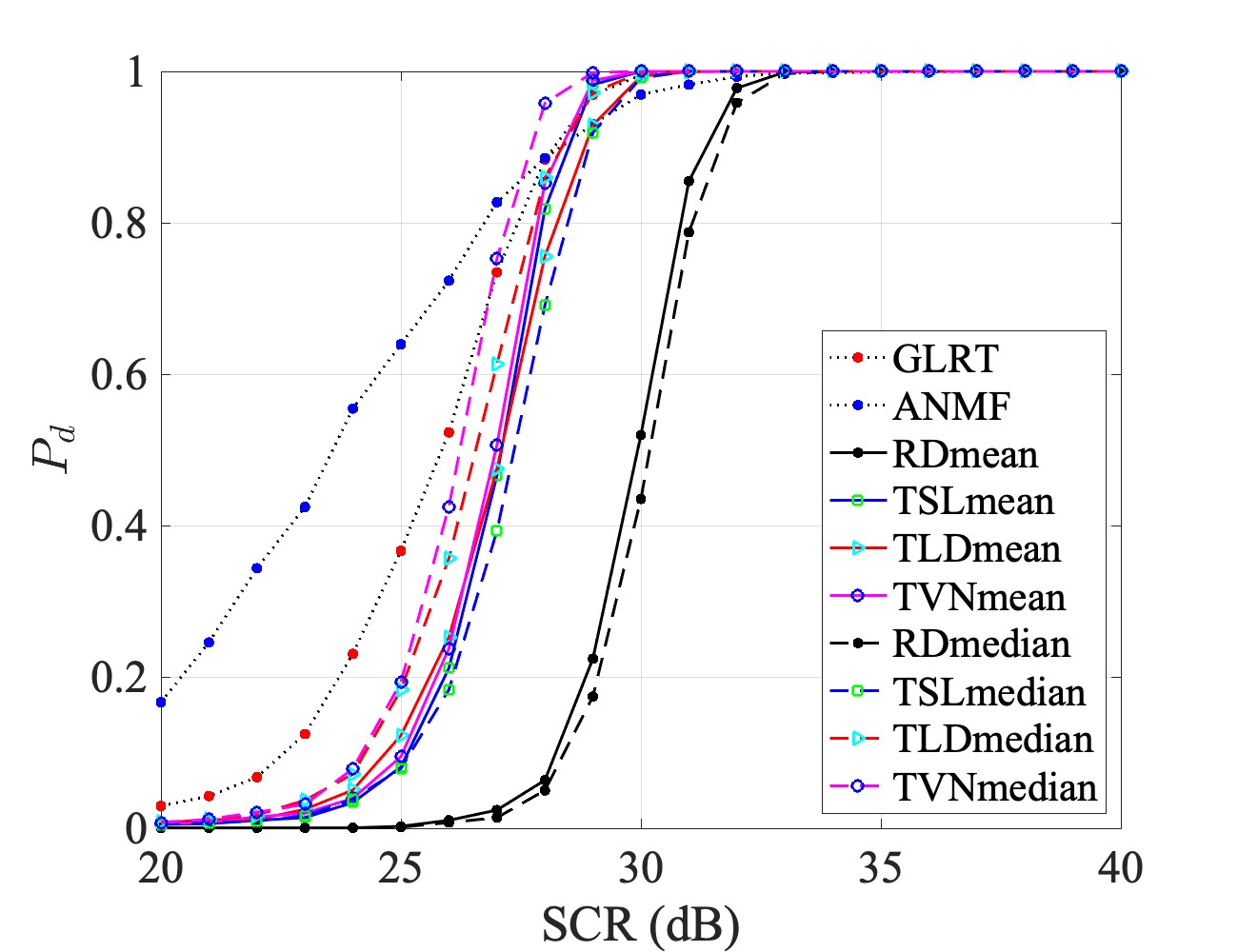}
        \subcaption{Compound-Gaussian clutter, $m=16$}\label{fig:m=16gamma}
    \end{minipage}
\caption{$P_d$ versus signal-to-clutter ratio for the TBD means and medians, the RD mean and median, the GLRT and the ANMF.}\label{fig:detectionperformance}
\end{figure}

\section{Robustness analysis}
\label{sec:RA}
In real detection, the received signal may also include outliers, and hence robustness about estimate of the covariance matrix is crucial. To evaluate the robustness of the detectors to outliers, the influence function has usually been employed \cite{myarticle}. It was noted that the influence functions with respect to  the TSL mean and the TLD mean can easily be calculated analytically but not for the TVN mean \cite{myarticle}. The situation for the median counterparts is often more complicated. In this section, we will propose an orthogonal basis for Hermitian matrices; by doing so, we are allowed to rewrite the equations of influence functions as a linear system, that can be solved analytically in a routine manner. 


\subsection{An orthogonal basis for Hermitian matrices}
As we will see later that influences functions are defined as a small perturbation of the means or medians that depends on a Hermitian matrix.  A well-known basis for the $N^2$ dimensional space of Hermitian matrices  $\mathscr{H}(N,\mathbb{C})$ is given by $N^2$ number of matrices:
\begin{equation}
	\begin{aligned}
		&\bm{E}_{ii}, &1\leq i\leq N; \\
		&\bm{E}_{ij} + \bm{E}_{ji},  ~& 1\leq i<j\leq N; \\
		&\mathrm{i} \left( \bm{E}_{ij} - \bm{E}_{ji} \right),  ~ &1\leq i<j\leq N, 
	\end{aligned}
\end{equation} 
where $\bm{E}_{ij}$ represents the $N \times N$ matrix with 1 at row $i$ and column $j$ and 0 elsewhere. 
To ease the application in the following robustness analysis, we normalize them into the following orthogonal  basis
%
\begin{equation}\label{eq:bas}
	\begin{aligned}
		&\bm{E}_{ii}, &1\leq i\leq N; \\
		&\frac{1}{\sqrt{2}} \left( \bm{E}_{ij} + \bm{E}_{ji}\right) , ~ &1\leq i<j\leq N ; \\
	 	&\frac{\mathrm{i}}{\sqrt{2}}\left( \bm{E}_{ij} - \bm{E}_{ji} \right), ~  &1\leq i<j\leq N,
	\end{aligned}
\end{equation} 
and denote it by $\{\mathsf{E}_{k}\}_{k=1}^{N^2}$. 
Each Hermitian matrix $\bm{H} \in \mathscr{H}(N,\mathbb{C})$ can be represented  as a linear combination using  the orthogonal basis, namely
\begin{equation}
	\bm{H} = h^{k} \mathsf{E}_{k}, \quad h^k\in\mathbb{R},
\end{equation}
where the summation is over $\{1,\ldots,N^2\}$ and the Einstein summation convention is applied here and all through the paper.
It is immediate to check that the basis $\{\mathsf{E}_{k}\}_{k=1}^{N^2}$ is orthogonal by showing that 
\begin{equation}
	\langle \mathsf{E}_{k} ,\mathsf{E}_{l} \rangle = \delta_{k l},\quad \forall  k,l = 1,\ldots, N^2,
\end{equation}
where $\delta_{kl}$ is the Kronecker delta that gives 1 when $k=l$ and 0 otherwise. Given a  Hermitian matrix  $\bm{H}$, it is immediate to write down its coordinate expression as
\begin{equation}
\bm{H}=h^k\mathsf{E}_k, \text{ with }  h^k=\langle\bm{H},\mathsf{E}_k\rangle.
\end{equation}

\subsection{Robustness via influence functions}
Let $\overline{\bm{R}}$ be the TBD medians (or TBD means, RD mean, RD median) of a given set of $m$ HPD matrices $\{ \bm{R}_i\}_{i=1}^m$, and let $\widehat{\bm{R}}$ be the TBD medians (or TBD means, RD mean, RD median) of the mixed HPD data including $n$ number of outliers $\{ \bm{P}_j\}_{j=1}^n$. 
By thinking of the outliers as a perturbation $\varepsilon \left( \varepsilon << 1 \right)$ of each mean or median, $\widehat{\bm{R}}$ and $\overline{\bm{R}}$ can be related by 
\begin{equation}
	\widehat{\bm{R}} = \overline{\bm{R}} + \varepsilon \bm{H} \left( \{ \bm{R}_i\}_{i=1}^m,\{ \bm{P}_j\}_{j=1}^n \right) + \mathcal{O} \left( \varepsilon^2\right),
\end{equation}
where $ \bm{H} \left( \{ \bm{R}_i\}_{i=1}^m,\{ \bm{P}_j\}_{j=1}^n \right) $ is a Hermitian matrix depending on the  outliers, and the $m$ HPD matrices $\{ \bm{R}_i\}_{i=1}^m$, the latter of which are assumed to be given and fixed.
The function 
\begin{equation}\label{eq:inff}
 f\left( \{ \bm{R}_i\}_{i=1}^m,\{ \bm{P}_j\}_{j=1}^n \right) \coloneqq\dfrac{\norm{\bm{H}\left( \{ \bm{R}_i\}_{i=1}^m,\{ \bm{P}_j\}_{j=1}^n \right)}}{\norm{\overline{\bm{R}}}} 
\end{equation}
is defined as the (normalized)  {\it influence function}.

To calculate the corresponding influence functions, define the objective function $G(\bm{R})$  for the contaminated $m+n$ number of HPD matrices as follows 
\begin{equation}
	G(\bm{R}):= (1-\varepsilon)\frac{1}{m}\sum_{i=1}^{m} d\left(\bm{R},\bm{R}_i \right) + \varepsilon\frac{1}{n}\sum_{j=1}^{n} d\left(\bm{R},\bm{P}_j \right),
\end{equation}
where $d$ is either the distance or divergence functions for medians, and square of them for means.
Since $\bm{\widehat{R}}$ is any mean or median of the mixed $m+n$ HPD matrices, we have $\nabla G(\bm{\widehat{R}}) = \bm{0},$\footnote{Note that in the Riemannian case, this should be $\operatorname{grad} G(\widehat{\bm{R}})=0$, which is nevertheless equivalent to $\nabla G(\widehat{\bm{R}})=0$ for HPD manifolds equipped with the AIRM.} that is, 
\begin{equation}
		(1-\varepsilon) \frac{1}{m}\sum_{i=1}^{m} \nabla d\left(\bm{\widehat{R}},\bm{R}_i \right) +  \varepsilon\frac{1}{n}\sum_{j=1}^{n} \nabla d\left(\bm{\widehat{R}},\bm{P}_j \right) = \bm{0}.
\end{equation}
Differentiating $\nabla G(\widehat{\bm{R}})=\bm{0}$ about$\varepsilon$ at $\varepsilon=0$ , we obtain a matrix equation for $\bm{H}$:
\begin{equation}\label{eq:inf0}
	\hspace{-8mm}\frac{1}{m}\frechetderi \left( \sum_{i=1}^{m} \nabla d\left(\bm{\widehat{R}},\bm{R}_i \right) \right) + \frac{1}{n}\sum_{j=1}^{n} \nabla d\left(\overline{\bm{R}},\bm{P}_j \right) = \bm{0} .
\end{equation}
Next, we are going to show that the above equation can be solved analytically using the orthogonal basis \eqref{eq:bas} defined above. 

Recall that the gradient of a function defined in HPD spaces is a Hermitian matrix. Using the orthogonal basis $\{\mathsf{E}_{k}\}_{k=1}^{N^2}$, we can write
\begin{equation}\label{eq:J}
\sum_{i=1}^{m} \nabla d\left(\bm{\widehat{R}},\bm{R}_i \right)=\xi^l(\widehat{\bm{R}})\mathsf{E}_l
\end{equation}
and
\begin{equation}
\sum_{j=1}^{m} \nabla d\left(\overline{\bm{R}},\bm{P}_j \right) =\phi^l\mathsf{E}_l.
\end{equation}
Letting $\bm{H}=h^s\mathsf{E}_s$, the first term in \eqref{eq:inf0} then becomes 
\begin{eqnarray}\label{eq:inf1}
 \dfrac{1}{m}\frechetderi \xi^l(\widehat{\bm{R}})\mathsf{E}_l &=&\dfrac{1}{m}\frechetderi \xi^l(\overline{\bm{R}}+\varepsilon\bm{H})\mathsf{E}_l \nonumber \\
&=&\dfrac{1}{m}\langle \nabla \xi^l(\overline{\bm{R}}),\bm{H}\rangle \mathsf{E}_l \nonumber \\
&=&\dfrac{h^k}{m}\langle \nabla \xi^l(\overline{\bm{R}}),\mathsf{E}_k\rangle \mathsf{E}_l.
\end{eqnarray}
Definition of the  gradient \eqref{eq:gra} (induced to HPD spaces) is applied here. Although a linear system about $\{h^k\}_{k=1}^{N^2}$  can be obtained, unfortunately, as $\{\xi^l(\overline{\bm{R}})\}_{l=1}^{N^2}$ is the coordinates whose gradient can be practically difficult to be calculated, in particular, in numerical simulations. Alternatively, noticing the first term in \eqref{eq:inf0} is Hermitian and linear about $\{h^k\}_{k=1}^{N^2}$ (from \eqref{eq:inf1}), we can simply write 
\begin{equation}
\frechetderi \left( \sum_{i=1}^{m} \nabla d\left(\bm{\widehat{R}},\bm{R}_i \right) \right)=h^k\theta_k^l\mathsf{E}_l.
\end{equation}
Consequently, we have successfully written the matrix equation \eqref{eq:inf0} as a linear system about $\{h^k\}_{k=1}^{N^2}$ with matrix coefficients:
\begin{equation}
\frac{1}{m}h^k\theta_k^l\mathsf{E}_l+\frac{1}{n}\phi^l\mathsf{E}_l=\bm{0}.
\end{equation}
%
Conducting Frobenius metric on both sides with each $\mathsf{E}_s$ and making use of the orthogonality, we obtain a linear system for $\{h^k\}_{k=1}^{N^2}$, reading
\begin{equation}\label{eq:inf00}
\frac{1}{m}\theta_{k}^sh^k+\frac{1}{n}\phi^s=0, \quad s=1,\ldots,N^2,
\end{equation} 
which can be immediately solved if the coefficient matrix $(\theta_{k}^s)$ is invertible. We summarize the above deduction into the following theorem.

\begin{thm}
Under the orthogonal basis $\{\mathsf{E}_{k}\}_{k=1}^{N^2}$ defined by \eqref{eq:bas}, the Hermitian matrix $\bm{H}=h^k\mathsf{E}_k$ in the influence function \eqref{eq:inff} is given by linear system \eqref{eq:inf00}.
\end{thm}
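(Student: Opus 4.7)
The plan is to follow the deduction already sketched in the prose preceding the theorem and make it rigorous: turn the Hermitian matrix equation \eqref{eq:inf0} into a scalar linear system by expanding everything in the orthogonal basis $\{\mathsf{E}_k\}_{k=1}^{N^2}$. The whole argument is essentially a change of coordinates in $\mathscr{H}(N,\mathbb{C})$, so I would organise it into three clean steps.

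First, I would derive \eqref{eq:inf0} carefully from the critical-point condition. By definition $\widehat{\bm{R}}$ satisfies $\nabla G(\widehat{\bm{R}})=\bm{0}$ for the contaminated objective, which in the Riemannian case coincides with $\operatorname{grad} G(\widehat{\bm{R}})=\bm{0}$ under the AIRM, as noted in the footnote. Substituting the ansatz $\widehat{\bm{R}}=\overline{\bm{R}}+\varepsilon\bm{H}+\mathcal{O}(\varepsilon^2)$, differentiating in $\varepsilon$ at $\varepsilon=0$, and using that $\overline{\bm{R}}$ already zeroes the uncontaminated gradient, the $\varepsilon$-independent contribution drops out and \eqref{eq:inf0} emerges as a Hermitian matrix identity.

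Second, I would expand both Hermitian summands in $\{\mathsf{E}_k\}_{k=1}^{N^2}$. The definition \eqref{eq:gra} of the Euclidean gradient implies that $\frechetderi \sum_i\nabla d(\overline{\bm{R}}+\varepsilon\bm{H},\bm{R}_i)$ is $\mathbb{R}$-linear in $\bm{H}$, hence in its coordinates; accordingly it admits the representation $h^k\theta_k^l\mathsf{E}_l$ for real coefficients $\theta_k^l$ that depend only on $\overline{\bm{R}}$ and $\{\bm{R}_i\}_{i=1}^m$, obtained by applying the second Fréchet derivative of $d$ at $\overline{\bm{R}}$ in the direction $\mathsf{E}_k$ and projecting onto $\mathsf{E}_l$. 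Likewise $\frac{1}{n}\sum_j\nabla d(\overline{\bm{R}},\bm{P}_j)=\phi^l\mathsf{E}_l$, and substitution into \eqref{eq:inf0} yields
\begin{equation*}
\frac{1}{m}h^k\theta_k^l\mathsf{E}_l+\frac{1}{n}\phi^l\mathsf{E}_l=\bm{0}.
\end{equation*}

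Finally, I would pair both sides with $\mathsf{E}_s$ in the Frobenius metric and invoke orthogonality $\langle\mathsf{E}_k,\mathsf{E}_l\rangle=\delta_{kl}$ from the preceding subsection, which collapses the matrix identity to the scalar system \eqref{eq:inf00} for $s=1,\ldots,N^2$; this is exactly the conclusion. The change-of-coordinates bookkeeping is routine once the basis is in hand. The genuine obstacle, which I expect to meet not in the proof itself but whenever the theorem is applied to a specific divergence (TSL, TLD, TVN, or the AIRM distance), is the explicit identification and invertibility of the coefficient matrix $(\theta_k^s)$, since the second-order Fréchet derivative of each $d$ at $\overline{\bm{R}}$ must be computed case by case; fortunately these are standard manipulations of $\Log$, $\operatorname{tr}$, and matrix inversion, after which the linear system immediately delivers $\bm{H}$ and hence the influence function exactly.
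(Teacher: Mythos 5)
Your proposal follows the same route as the paper, whose ``proof'' is precisely the deduction preceding the theorem: derive the matrix identity \eqref{eq:inf0} from the perturbed critical-point condition (correctly noting that the uncontaminated gradient vanishes at $\overline{\bm{R}}$), expand both Hermitian terms in $\{\mathsf{E}_k\}$ using linearity in $\bm{H}$ to write the first term as $h^k\theta_k^l\mathsf{E}_l$, and project onto each $\mathsf{E}_s$ via orthogonality to obtain \eqref{eq:inf00}. Your closing remark about the case-by-case computation and invertibility of $(\theta_k^s)$ matches the paper's own caveat, so the argument is correct and essentially identical.
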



{\bf Influence function of the RD mean.}  Taking the RD mean as an illustrative example, we show how the above deduction helps us to derive the matrix $\bm{H}$ and hence the corresponding influence function. 
Regarding the contaminated HPD matrices, define $G(\bm{R})$ as the objective function, i.e.,
\begin{equation}\label{eq:RDmeaninf}
    G(\bm{R}):= (1-\varepsilon)\frac{1}{m}\sum_{i=1}^{m} d^2_R\left(\bm{R},\bm{R}_i \right) + \varepsilon\frac{1}{n}\sum_{j=1}^{n} d^2_R\left(\bm{R},\bm{P}_j \right), 
\end{equation}
where 
\begin{equation}
	d^2_R\left(\bm{R},\bm{R}_i \right) =  \norm{\Log \left( \bm{R}_i^{-1} \bm{R} \right)}^2 = \tr \left( \Log^2 \left( \bm{R}_i^{-1} \bm{R} \right) \right) . \\
\end{equation}
Differentiate $ \operatorname{grad} G(\widehat{\bm{R}})=\bm{0}$ with respect to $\varepsilon$ at $\varepsilon=0$ and obtain 
\begin{equation}
        \hspace{-7mm}\frac{1}{m}\frechetderi \sum_{i=1}^{m} \bm{\widehat{R}} \Log\left(\bm{R}_i^{-1} \bm{\widehat{R}}\right) + \frac{1}{n}\sum_{j=1}^{n} \overline{\bm{R}} \Log \left(\bm{P}_j^{-1} \overline{\bm{R}} \right) = \bm{0}. 
\end{equation}
The second term can be simply written as 
\begin{equation}
 \frac{1}{n} \sum_{j=1}^{n} \overline{\bm{R}} \Log \left(\bm{P}_j^{-1} \overline{\bm{R}} \right) = \frac{1}{n}\phi^{k} \mathsf{E}_{k},
\end{equation}
while the first term can be 
expanded as \cite{myarticle,Moa2005}
\begin{equation*} 
\frac{1}{m} \sum_{i=1}^{m} \left( \bm{{H}} \Log \left(\bm{R}_i^{-1} \overline{\bm{R}}\right) + \overline{\bm{R}} \int_0^1 \left[ ( \bm{R}_i^{-1} \overline{\bm{R}} - \bm{I})\tau+\bm{I} \right]^{-1} \bm{R}_i^{-1} \bm{H} \left[ ( \bm{R}_i^{-1} \overline{\bm{R}} - \bm{I})\tau+\bm{I} \right]^{-1} d\tau \right).
\end{equation*}
Substituting $\bm{H}=h^k\mathsf{E}_k$ to the first term, it can be written in the form $\frac{1}{m}h^k\theta_k^s\mathsf{E}_s$, where $\theta_k^s$ is given by
\begin{equation*} 
\theta_k^s=\left\langle  \sum_{i=1}^{m} \left(\mathsf{E}_k\Log \left(\bm{R}_i^{-1} \overline{\bm{R}}\right)+ \overline{\bm{R}} \int_0^1 \left[ ( \bm{R}_i^{-1} \overline{\bm{R}} - \bm{I})\tau+\bm{I} \right]^{-1} \bm{R}_i^{-1} \mathsf{E}_k \left[ ( \bm{R}_i^{-1} \overline{\bm{R}} - \bm{I})\tau+\bm{I} \right]^{-1} d\tau \right),\mathsf{E}_s\right\rangle.
\end{equation*}

Finally, we obtain the linear system \eqref{eq:inf00} about $\{h^k\}_{k=1}^{N^2}$ by conducting the Frobenius metric with respect to each element of the orthogonal basis, which will give us the influence function for the RD mean. Those of the RD median, as well as  the TBD means and medians can also be analytically computed in a similar way; details are omitted here. 

\if0 
{\bf Influence functions of the TSL median.}  
To derive influence functions for the TSL median, we define $G(\bm{R})$ as the objective function for the enlarged set of HPD matrices, namely
\begin{equation}
    G(\bm{R}):= (1-\varepsilon)\frac{1}{n}\sum_{i=1}^{n} \left\{\delta_{\mathrm{TSL}} \left(\bm{R},\bm{R}_i \right) \right\}^\frac{1}{2} + \varepsilon\frac{1}{n}\sum_{j=1}^{n} \left\{\delta_{\mathrm{TSL}}\left(\bm{R},\bm{P}_j \right) \right\}^\frac{1}{2}. 
\end{equation}
Because $\widehat{\bm{R}}$ are the TSL median of the contaminated data of HPD matrices $\{ \bm{R}_i\}_{i=1}^n$ and outliers $\{ \bm{P}_j\}_{j=1}^m$, from the equation \eqref{eq:infdefep} we get 
\begin{equation}\label{eq:TSLmedian2}        
	\frac{1}{n} \sum_{i=1}^{n} \frac{1}{\sqrt{1+\norm{\bm{R}_i}^2}} \frechetderi \frac{ \widehat{\bm{R}}  - \bm{R}_i }{ \left\{\delta_{\mathrm{TSL}} \left(\widehat{\bm{R}},\bm{R}_i \right) \right\}^\frac{1}{2} } + \frac{1}{n}\sum_{j=1}^{n} \frac{ \overline{\bm{R}} - \bm{P}_j }{ \left\{\delta_{\mathrm{TSL}} \left(\overline{\bm{R}},\bm{P}_j \right) \right\}^\frac{1}{2} \sqrt{1+\norm{ \bm{P}_j }^2}} = 0.
\end{equation}
Then we get 
\begin{equation}\label{eq:infTSLmedian}
	\begin{aligned}
		& \sum_{i=1}^{n} \frac{1}{\left\{\delta_{\mathrm{TSL}}\left(\overline{\bm{R}},\bm{R}_i \right) \right\}^\frac{1}{2} \sqrt{1+\norm{\bm{R}_i}^2}}\left( \bm{H} - \frac{ \tr \left( \bm{H} \left(\overline{\bm{R}} - \bm{R}_i \right) \right)}{ \tr \left(\overline{\bm{R}} - \bm{R}_i \right)^2} \left(\overline{\bm{R}}-\bm{R}_i\right) \right)= - \frac{m}{n}\sum_{j=1}^{n} \frac{ \overline{\bm{R}} - \bm{P}_j }{ \left\{\delta_{\mathrm{TSL}}\left(\overline{\bm{R}},\bm{P}_j \right) \right\}^\frac{1}{2} \sqrt{1+\norm{ \bm{P}_j }^2}}. 
	\end{aligned}
\end{equation}
Here, we introduce the ida of matrix orthogonal basis; 
\begin{equation*}
    \begin{aligned}
        & \bm{H} = h^{\alpha} \mathsf{E}_{\alpha}, \\
		&\overline{\bm{R}} - \bm{R}_i = \bm{\theta}^{\beta} \mathsf{E}_{\beta}\\
        & - \frac{m}{n}\sum_{j=1}^{n} \frac{ \overline{\bm{R}} - \bm{P}_j }{ \left\{\delta_{\mathrm{TSL}}\left(\overline{\bm{R}},\bm{P}_j \right) \right\}^\frac{1}{2} \sqrt{1+\norm{ \bm{P}_j }^2}} = \bm{\phi}^{\beta} \mathsf{E}_{\beta}, \\
    \end{aligned}
\end{equation*} 
\fi
In the following simulations, we firstly generate $m$ data of sample, each of which is  $N$ dimensional and complex valued, generated  from $CN \left( 0, \bm{\Sigma} \right)$ where the  covariance matrix  $\bm{\Sigma}$  is given by \eqref{eq:sig}. 
The means and medians $ \overline{\bm{R}}$ are calculated from the $m$ HPD covariance matrices $\{ \bm{R}_i\}_{i=1}^m$ of the $m$ number of sample data. We then mix $n$ outliers $\{ \bm{P}_j\}_{j=1}^n$  with the sample data; the outliers are  modeled as the covariance matrix of $\xi \bm{s} + \bm{c}$  where $\bm{s}$ denotes the steering vector, $\bm{c}$ is the Gaussian clutter with the covariance matrix $\bm{\Sigma}$,  and $\xi$ is the amplitude coefficient derived by the signal-to-clutter ratio which is set to 40 dB 
as \eqref{eq:SCR}.
In the numerical simulations, the number of sample is $m=50$ while the number of outliers $n$ varies from 1 to 40. 
Regarding the influence function of the SCR, the $\overline{\bm{R}}$ is calculated from the $m$ number of sample data and the $\widehat{\bm{R}}$ is calculated from the $m+n$ number contaminated data by using \eqref{eq:SCM}. Therefore the influence function of the SCR is derived similarly to that in \eqref{eq:inff}.
\Figref{fig:Influencefunction} plots the influence functions which are calculated by averaging 1000 trails and are shown in the logarithmic scale. It is noticed that the influence functions of the TBD medians are smaller than the TBD means and the geometric means and 
medians except for the TSL mean are more robust than the SCM. 
However, regardless of its worse detection performance compared with the TBD means and medians, the RD mean and median are more robust, although the RD mean and median take much longer time than the TBD means and medians. Therefore, from a comprehensive perspective the TBD medians can serve as effective estimators.

\begin{figure}[htbp]
	\centering
	\includegraphics[width = 0.9\linewidth ]{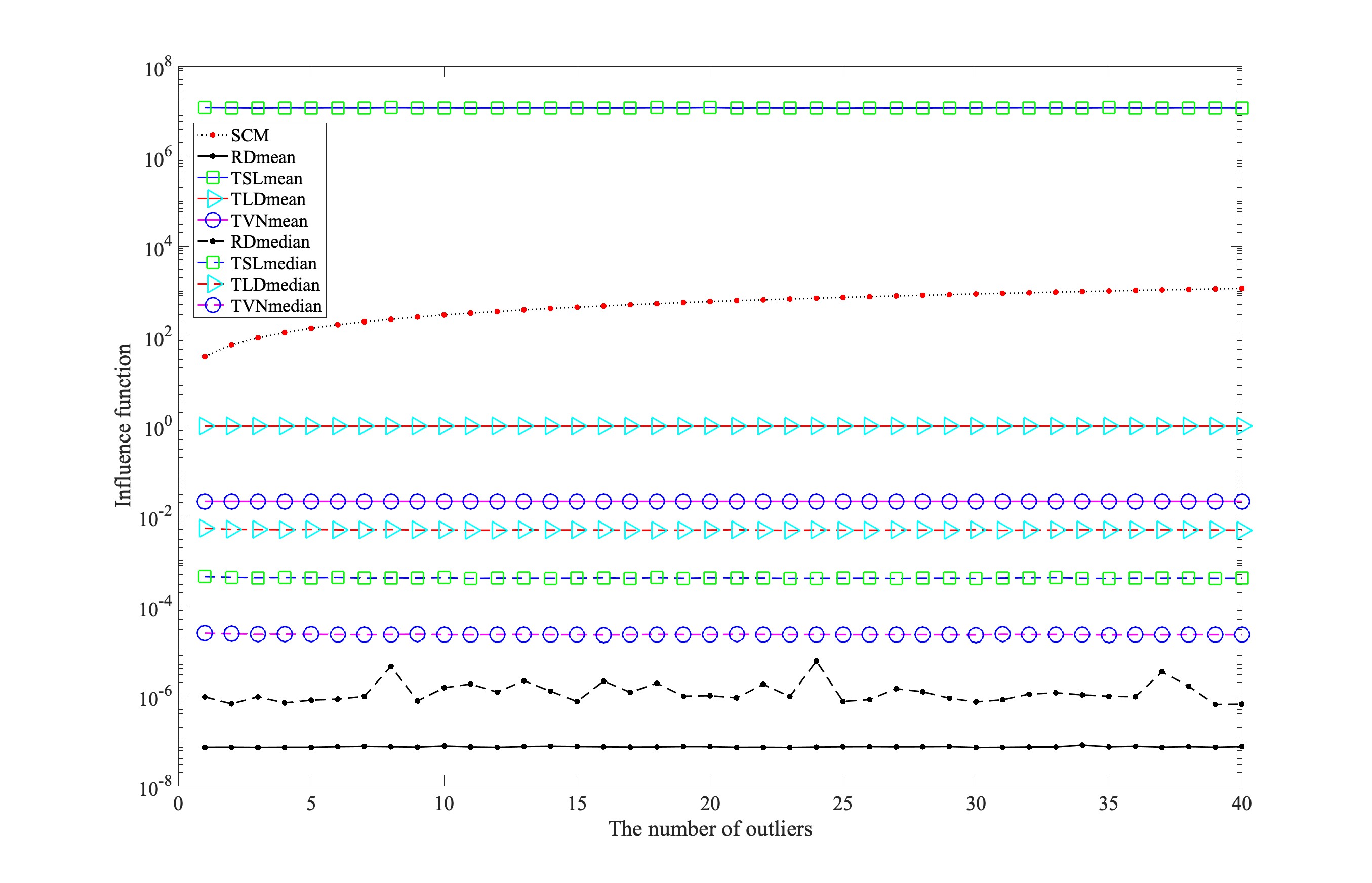}
	\caption{Robustness analysis of the SCM,  the RD mean and median,  and the TBD means and medians via the accurate influence functions.}
	\label{fig:Influencefunction}
\end{figure}

\section{Conclusions}
\label{sec:con}
The TBD medians were defined and applied to matrix-CFAR detectors corresponding to the well-known TBDs, i.e., the TSL, and the TLD and TVN divergences. To illustrate the improvement of detection performance, they were compared with their mean counterparts, the RD mean and median associated to the AIRM, as well as the conventional GLRT detector and the ANMF. 
Simulations showed that the TLD and TVN medians performed better than all means regardless of the number of observations.
To analyze the robustness of those detectors to outliers, we transformed the intractable governing matrix equation into a linear system by introducing an orthogonal basis for Hermitian matrices. This allowed us to calculate the corresponding influence functions analytically and made the robustness analysis accurate and precise. We believe this methodology works equally well for other geometric detectors in matrix-CFAR.
By simulations, the RD mean is most robust among all of them, while  the TBD medians are more robust  compared with the TBD means. From an overall point of view, the TBD medians, in particular the TVN median, have best performance and satisfactory robustness.

As the optimization problem is convex and defined in a convex space, the TBD means exist uniquely.  Unfortunately, an analytical verification of the existence and uniqueness of TBD medians is yet to be constructed, although numerical simulations suggest so.  Future research in applications includes   detection problems for real-world data using the TBD medians. Finally, it would also be interesting to extend the current study to other covariance estimators, such as the diagonal loading \cite{8450037,4383590}, and structured
covariance interference 
(e.g., \cite{COP2016}).


\section*{Acknowledgements.} The authors thank Prof. Sergey Yurish for the kind invitation and for the hospitality during the ASPAI'2021 Conference, in which part of the work was presented \cite{onoproceedings}. 
The authors would also like to thank Xiaoqiang Hua and Hiroki Sato for helpful discussions and to the referees for the constructive comments. This work was partially supported by Japan Society for the Promotion of Science KAKENHI (No. JP20K14365), Japan Science and Technology Agency CREST (No. JPMJCR1914), and the Fukuzawa Fund of Keio University.

\bibliographystyle{plain} 
\bibliography{mybibfile}

\end{document}